\def\Q{{\mathbb Q}}        % rationals
\def\R{{\mathbb R}}        % reals
\def\P{{\mathbb P}}        % probability
\def\E{{\mathbb E}}        % expectation
\def\1{{\mathbf 1}}        % indicator
\def\F{{\mathcal F}}        % potential measure
\def\G{{\mathcal G}}        % potential measure
\DeclareMathOperator*{\esssup}{ess\,sup}
\DeclareMathOperator*{\essinf}{ess\,inf}
\newtheorem{theorem}{Theorem}[section]
\newtheorem{lemma}[theorem]{Lemma}
\newtheorem{proposition}[theorem]{Proposition}
\theoremstyle{remark}
\newtheorem{remark}[theorem]{Remark}
\newtheorem{example}[theorem]{Example}
\numberwithin{equation}{section}
\numberwithin{theorem}{section}
\begin{document}
\title{Risk Premia and Optimal Liquidation of Credit Derivatives\thanks{First draft: September 26, 2011. Revised: June 30, 2012, and September 25, 2012. }}

\author{Tim Leung\thanks{IEOR Department, Columbia University, New York, NY 10027; email:\,\mbox{leung@\,ieor.columbia.edu}. Work partially  supported by NSF grant DMS-0908295.}\\Columbia University  \and
Peng Liu\thanks{Applied Mathematics and Statistics Department, Johns
Hopkins University, Baltimore, MD 21218; email:\,\mbox{pliu19@jhu.edu}.
}\\Johns Hopkins University}
\date{\small{September 25, 2012}}
 \maketitle

\begin{abstract}This paper studies the optimal timing to liquidate credit derivatives in a general intensity-based credit risk model  under stochastic interest rate. We incorporate the  potential price discrepancy between the market and investors, which is characterized by risk-neutral valuation under different default risk premia  specifications.  We quantify the value of optimally timing to sell through the concept of  {delayed liquidation premium}, and analyze the associated probabilistic representation and variational inequality.  We illustrate the optimal liquidation policy for both single-named and multi-named credit derivatives. Our model is extended  to study   the sequential buying and selling problem with and without short-sale constraint.
\end{abstract}

\begin{small}
 {\textbf{Keywords:}~optimal liquidation, credit derivatives, price discrepancy, default risk premium, event risk premium}\\

 {\textbf{JEL Classification:}~G12, G13, C68}
\end{small}

\newpage
\tableofcontents
\newpage

\section{Introduction}\label{sect-intro}
In credit derivatives trading, one  important question is how the market compensates investors for bearing credit risk. A number of studies  \cite{Azizpour2008,Berndt2005,Driessen2005,Jarrow2005} have examined analytically and empirically the structure  of  default risk premia inferred from the market prices of  corporate bonds, credit default swaps, and multi-name credit derivatives. A major risk premium component  is the  \emph{mark-to-market risk premium} which accounts for  the fluctuations in default risk.  Under reduced-form models of  credit risk \cite{Duffie1999,Jarrow1995, Lando1998}, this is connected with  a drift  change of the state variable diffusion driving  the default intensity.  In addition, there is the \emph{event risk premium} (or jump-to-default  risk premium) that compensates for  the uncertain timing of the default event, and is measured by the ratio of the risk-neutral intensity to the historical intensity (see \cite{Azizpour2008,Jarrow2005}).

From standard no-arbitrage pricing theory, risk premia specification is  inherently tied  to the selection of risk-neutral  pricing measures.  A typical buy-side investor (e.g. hedge fund manager or proprietary trader) would identify trading opportunities by looking for mispriced contracts in the market. This can be interpreted as selecting a pricing measure to reflect her view on credit risk evolution  and the required risk premia. As a result, the investor's pricing measure may differ from that represented by the prevailing market prices. In a related study,  Leung and Ludkovski \cite{LeungLudkovski2}  showed that  such a price discrepancy  would also arise from pricing under marginal utility.

Price discrepancy  is also important for investors with  credit-sensitive  positions who may need to control risk exposure through liquidation. The central issue lies in the timing of liquidation as investors have the option to sell at the current market price or wait for a later  opportunity. The optimal strategy, as we will study,  depends on  the sources of risks, risk premia, as well as derivative payoffs.

This  paper tackles the  optimal liquidation problem on two fronts. First, we provide a general mathematical framework for  price discrepancy between the market and investors under an intensity-based credit risk model.  Second, we derive and analyze  the optimal stopping problem corresponding to  the liquidation of  credit derivatives under price discrepancy.

 In order to measure the benefit of optimally timing to sell as opposed to immediate liquidation, we employ the concept of \emph{delayed liquidation premium}.  It turns out to be a very useful tool for analyzing the optimal stopping problem. The intuition is that the investor should wait as long as the delayed liquidation premium is strictly positive. Applying  martingale arguments, we deduce the scenarios where  immediate  or  delayed liquidation is optimal (see Theorem \ref{thm_main}).  Moreover, through its  probabilistic representation, the delayed liquidation premium  reveals the roles of risk premia in the liquidation timing. Under a Markovian credit risk model, the optimal timing is characterized by a liquidation boundary solved from a variational inequality.  For numerical illustration, we provide a series of examples where  the default intensity and interest rate follow Cox-Ingersoll-Ross (CIR) or Ornstein-Uhlenbeck (OU) processes.

Our study also considers the connection between different risk-neutral pricing measures (or equivalent martingale measures) in incomplete markets. Well-known examples of candidate pricing measures  that are consistent with the no-arbitrage  principle include the minimal martingale measure \cite[]{FollmerSchweizer1990}, the minimal entropy martingale measure
\citep{Frittelli00, Fujiwara2003}, and the $q$-optimal martingale measure
\cite[]{Hobson2004,HHHS}. The investor's selection of various pricing measures may also be interpreted via marginal utility indifference valuation (see, among others, \cite{Davis97,LeungLudkovski2,LSZ2}).

 For many parametric credit risk models, the market pricing measures and risk premia can be calibrated given sufficient market data of  credit derivatives.  For instance, Berndt et al. \cite{Berndt2005} estimated default risk premia from credit default swap (CDS) rates and Moody’s KMV expected default frequency (EDF) measure.  For CDO tranche spreads,  Cont and Minca \cite{Cont2011} constructed a pricing measure and default intensity based on entropy minimization.  In this paper, we   focus on investigating  the impact of pricing measure on the investor's liquidation timing for various credit derivatives, including defaultable bonds, CDSs, as well as, multi-name credit derivatives.

In  recent literature, a number of models have been proposed  to incorporate the idea of mispricing into optimal investment.
Cornell et al. \cite{Cornell2007} studied portfolio optimization based on  perceived mispricing from the investor's strong belief in the stock price distribution.
Ekstr\"{o}m et al. \cite{Ekstrom2010} investigated the optimal liquidation of a call spread
when the investor's belief on the volatility differs from the implied volatility.  On the other hand, the
problem of optimal  stock liquidation involving price impacts has been studied in
\cite{Almgren2003,Rogers2010,Schied2009}, among others.

Our work is closest in spirit to \cite{LeungLudkovski2011} where the delayed purchase premium concept was used  to analyze  the optimal timing to  {purchase} equity European and American options under  a stochastic volatility model and a defaultable stock model.   In contrast,  the current paper  addresses the optimal timing to  {liquidate} various  credit derivatives.  In particular, we adopt a multi-factor  intensity-based  default  risk model for single-name credit derivatives, and a self-exciting top-down model for a credit default index swap. As a natural extension, we also investigate  the optimal timing to  buy and sell   a credit  derivative, with or without short-sale constraint, and provide numerical illustration of the   the optimal buy-and-sell strategy.

The rest of the paper is organized as follows. In Section 2, we present the
mathematical model for price discrepancy and  formulate the optimal
liquidation problem under a general intensity-based credit risk model. In
Section 3, we study the problem  within  a Markovian  market  and
characterize the optimal liquidation strategy for a general defaultable claim. In
Section 4, we apply our analysis  to a number of single-name credit derivatives, e.g. defaultable bonds  and credit
default swaps (CDS).  In Section 5, we discuss the optimal liquidation of credit default index swap. In Section 6, we examine the  optimal buy-and-sell strategy for defaultable claims.  Section 7 concludes the paper and suggests directions for future research.

\section{Problem Formulation}\label{sect-overview}

This section provides  the mathematical formulation of price discrepancy and  the optimal liquidation of credit derivatives under an  intensity-based credit risk model. We fix a probability space $(\Omega, \G, \P)$, where $\P$ is the historical measure, and denote $T$ as the maturity of derivatives in question. There is a stochastic risk-free interest rate process   $(r_t)_{0\leq t\leq T}$. The default  arrival  is described by the first jump of a doubly-stochastic Poisson process. Precisely, assuming a default intensity process $(\hat{\lambda}_t)_{0\leq t\leq T}$, we define  the  default time $\tau_d$  by
\begin{align}
\tau_d=\inf\{t\ge 0: \int_0^t \! \hat{\lambda}_s ds\ >E\},  \qquad \text{where}\  E\sim Exp\ (1)\, \,\text{and }\, E\perp\hat{\lambda}, r.
\end{align}The associated default counting process is  $N_t=\1_{\{t\geq\tau_d\}}$.  The filtration  $\mathbb{F}=(\F_t)_{0\leq t\leq T}$ is generated by $r$ and $\hat{\lambda}$. The full filtration $\mathbb{G}=(\G_t)_{0\leq t\leq T}$  is defined by $\mathcal{G}_t=\mathcal{F}_t\vee\mathcal{F}^N_t$ where $(\mathcal{F}^N_t)_{0\le t\le T}$ is generated by $N$ (see e.g. \citep[Chap. 5]{Schonbucher2003}).

\subsection{Price Discrepancy}
By  standard no-arbitrage pricing theory, the market price of a defaultable claim, denoted by $(P_t)_{0\leq t\leq T}$, is computed from a conditional expectation of discounted payoff under the market risk-neutral (or equivalent martingale) pricing measure $\Q\sim\P$. In many parametric credit risk models, the market pricing measure $\Q$ is related to the historical measure $\P$ via the default risk premia (see Section \ref{sect-premia}  below). We assume the standard hypothesis (H) that every $\mathbb{F}$-local martingale is a $\mathbb{G}$-local martingale holds under $\Q$ (see \cite[\S 8.3]{Bielecki2002}).

We can describe a general  defaultable claim by the quadruple  $(Y,
 A, R, \tau_d)$, where $Y \in \F_T$ is the terminal payoff if the defaultable claim survives at
 $T$, $(A_t)_{0\leq t\leq T}$ is a $\mathbb{F}$-adapted continuous
 process of finite variation with $A_0=0$ representing the promised
 dividends until maturity or default, and $(R_t)_{0\leq t\leq T}$ is a
 $\mathbb{F}$-predictable process representing the recovery payoff paid at
 default.  Similar notations are used by Bielecki et al. \cite{Bielecki2008} where  the following
 integrability conditions are assumed:
\begin{align}
&\E^{\Q}\big\{\big|e^{-\int_0^T  r_v  dv}Y\big|\big\}<\infty, \quad \E^{\Q}\big\{\big|\int_{(0,T]}e^{-\int_0^u  r_v  dv}(1-N_u)dA_u\big|\big\}<\infty, \quad \text{and}\nonumber\\
&\E^{\Q}\big\{\big|e^{-\int_0^{\tau_d\wedge T}  r_v  dv}R_{\tau_d\wedge T}\big|\big\}<\infty.\label{integrability}
\end{align}

For a defaultable claim  $(Y, A, R, \tau_d)$, the associated cash flow process
$(D_t)_{0\leq t\leq T}$  is defined by
\begin{align}
D_t:=Y\1_{\{\tau_d>T\}}\1_{\{t\geq T\}}+\int_{(0,t\wedge T]}  (1-N_u)dA_u+\int_{(0,t\wedge T]}  R_u dN_u. \label{Dt}
\end{align}
Then, the (cumulative) market price process $(P_t)_{0\leq t\leq T}$ is given by the conditional expectation under the market pricing measure $\Q$\,:
\begin{align}
P_t:=\E^{\Q}\big\{\int_{(0,T]} e^{-\int_t^u  r_v  dv}dD_u|\G_t\big\}. \label{P_price}
\end{align}One simple example is the zero-coupon zero-recovery defaultable bond $(1, 0, 0, \tau_d)$, whose  market price  is simply  $P_t =\E^\Q\big\{e^{-\int_t^T  r_v dv}\1_{\{\tau_d>T\}}|\G_t\big\}$.

When a perfect replication is unavailable,  the market is incomplete and there  exist different
risk-neutral pricing measures that give different no-arbitrage prices for the same defaultable claim. Mathematically, this amounts to assigning a
different risk-neutral pricing measure $\tilde{\Q}\sim \Q$. The
investor's reference price process $(\tilde{P}_t)_{0\leq t\leq T}$ is given by the conditional
expectation under investor's risk-neutral pricing measure $\tilde{\Q}$\,:
\begin{align}
\tilde{P}_t:=\E^{\tilde{\Q}}\big\{\int_{(0,T]} e^{-\int_t^u  r_v  dv}dD_u|\G_t\big\}, \label{investor Pr}
\end{align}
whose discounted price process $(e^{-\int_0^{t} \! r_v  dv}\tilde{P}_t)_{0\le t\leq T}$ is  a $(\tilde{\Q}, \mathbb{G})$-martingale.   We assume that the standard hypothesis (H) also holds under $\tilde{\Q}$.

\subsection{Optimal Stopping \& Delayed Liquidation Premium}\label{sect-delayed}
A defaultable claim holder can sell  her position at the prevailing  market price.  If she completely agrees with the market price, then she will be indifferent to sell at any time. Under price discrepancy, however,  there is a timing option embedded in the optimal liquidation problem. Precisely, in order to maximize the expected spread between the investor's price and the market price, the holder solves the optimal stopping problem:
\begin{align}
J_t:=\esssup_{\tau \in \mathcal{T}_{t,T}}~\E^{\tilde{\Q}}\big\{e^{-\int_t^{\tau}  r_v  dv}(P_\tau - \tilde{P}_\tau)|\G_t\big\}, \quad 0\leq t \leq T, \label{J_eq}
\end{align}
where $\mathcal{T}_{t,T}$ is the set of $\mathbb{G}$-stopping times  taking values in $[t,T]$.
Using repeated conditioning, we decompose \eqref{J_eq} to $J_t = V_t - \tilde{P}_t$, where
\begin{align}
V_t:=\esssup_{\tau \in \mathcal{T}_{t,T}}~\E^{\tilde{\Q}}\big\{e^{-\int_t^{\tau}  r_v  dv}P_\tau|\G_t\big\}\label{V_eq}.
\end{align}Hence, maximizing the price spread in \eqref{J_eq} is equivalent to maximizing the expected discounted future market value  $P_\tau$ under the investor's measure $\tilde{\Q}$ in \eqref{V_eq}.

The selection of the risk-neutral  pricing measure $\tilde{\Q}$ can be based on the investor's hedging criterion or  risk preferences.  For instance,
dynamic hedging under a quadratic criterion amounts to pricing under the well-known minimal martingale measure  developed by F\"{o}llmer and Schweizer \cite{FollmerSchweizer1990}.   On the other hand, different risk-neutral pricing measures may also arise from  marginal utility indifference pricing. In the cases of  exponential and power utilities, this pricing mechanism will lead the investor to select the minimal entropy martingale measure (MEMM) (see \cite{LeungLudkovski2}) and the $q$-optimal martingale measure (see \cite{HHHS}).

\begin{lemma}  For  $0\le t\le T$, we have $V_t \ge P_t \vee \tilde{P}_t$. Also, $V_{\tau_d}=\tilde{P}_{\tau_d}=P_{\tau_d}$ at default.
\end{lemma}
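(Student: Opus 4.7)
The plan is to prove each inequality by exhibiting a specific admissible candidate in the essential supremum defining $V_t$ in \eqref{V_eq}, and then to handle the default-time equalities by exploiting the freezing of the cumulative cash flow process after $\tau_d$. The first inequality $V_t \ge P_t$ is immediate: choosing $\tau = t \in \mathcal{T}_{t,T}$ collapses the exponential factor to $1$ and the conditional expectation to $P_t$.

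For $V_t \ge \tilde{P}_t$, I would take $\tau = T$. The key observation is that at maturity both prices reduce to the common $\mathcal{G}_T$-measurable path integral $P_T = \tilde{P}_T = \int_{(0,T]} e^{\int_u^T r_v dv} dD_u$, since no further cash flow remains on $(T,T]$ and the defining conditional expectations in \eqref{P_price} and \eqref{investor Pr} trivialize under either measure. Combining this with the $(\tilde{\Q}, \mathbb{G})$-martingale property of $(e^{-\int_0^t r_v dv} \tilde{P}_t)_{0 \le t \le T}$ stated just after \eqref{investor Pr} yields
\begin{align*}
\tilde{P}_t = \E^{\tilde{\Q}}\bigl\{e^{-\int_t^T r_v dv}\, \tilde{P}_T \,\big|\, \mathcal{G}_t\bigr\} = \E^{\tilde{\Q}}\bigl\{e^{-\int_t^T r_v dv}\, P_T \,\big|\, \mathcal{G}_t\bigr\} \le V_t.
\end{align*}

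For the default-time statement, I would observe that after $\tau_d$ the cash flow process in \eqref{Dt} is constant: the factor $(1-N_u)$ annihilates $dA_u$ for $u > \tau_d$, the jump measure $dN_u$ contributes only at $\tau_d$ itself, and $Y\1_{\{\tau_d>T\}}$ vanishes whenever $\tau_d \le T$. Consequently, for every $s \ge \tau_d$ both $P_s$ and $\tilde{P}_s$ collapse to the same $\mathcal{G}_s$-measurable expression $\int_{(0,\tau_d]} e^{\int_u^s r_v dv}\, dD_u$, which at $s=\tau_d$ produces $P_{\tau_d} = \tilde{P}_{\tau_d}$. The same formula further implies the pathwise identity $e^{-\int_{\tau_d}^{\tau} r_v dv}\, P_{\tau} = P_{\tau_d}$ for every $\tau \in \mathcal{T}_{\tau_d, T}$, so every candidate inside the essential supremum defining $V_{\tau_d}$ equals $P_{\tau_d}$, forcing $V_{\tau_d} = P_{\tau_d}$.

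The main mechanical obstacle is the careful bookkeeping of the accruing discount factor $e^{\int_u^s r_v dv}$ for $u \le s$, combined with the $\mathcal{G}_s$-measurability argument needed to justify that the relevant conditional expectations collapse onto their integrands; once this is in place, each of the three claims follows from a single application of the tower property or by inspection, without invoking deeper optimal stopping or Snell-envelope machinery.
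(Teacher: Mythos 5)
Your proof is correct and follows essentially the same route as the paper's: the candidate times $\tau=t$ and $\tau=T$ (combined with $P_T=\tilde{P}_T$ and the $(\tilde{\Q},\mathbb{G})$-martingale property of the discounted investor price) give $V_t\ge P_t\vee\tilde{P}_t$, and the freezing of the cash flow process after $\tau_d$ gives the default-time equalities. The only cosmetic difference is that for $V_{\tau_d}=P_{\tau_d}$ you use the pathwise identity $e^{-\int_{\tau_d}^{\tau}r_v\,dv}P_\tau=P_{\tau_d}$ to collapse every candidate in the essential supremum, whereas the paper replaces $P_\tau$ by $\tilde{P}_\tau$ and invokes the martingale property again; both are valid.
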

\begin{proof}
Since  $\tau=t$ and $\tau=T$ are candidate liquidation times, we conclude from \eqref{V_eq} that $V_t \ge P_t \vee \tilde{P}_t$. Also,
 we observe from \eqref{Dt} that $P_t= \int_{(0,\tau_d]} e^{-\int_t^u  r_v  dv}dD_u=\tilde{P}_t$ for $t\geq \tau_d \wedge T$.  This implies that \begin{align}
V_{\tau_d}=\esssup_{\tau \in \mathcal{T}_{\tau_d,T}}~\E^{\tilde{\Q}}\big\{e^{-\int_{\tau_d}^{\tau}  r_v  dv}P_\tau|\G_{\tau_d}\big\}=\esssup_{\tau \in \mathcal{T}_{\tau_d,T}}~\E^{\tilde{\Q}}\big\{e^{-\int_{\tau_d}^{\tau}  r_v  dv}\tilde{P}_\tau|\G_{\tau_d}\big\}=\tilde{P}_{\tau_d}=P_{\tau_d} \label{VP_equal}.
\end{align}
\end{proof}
The last equation  means that  price discrepancy vanishes when the default event is observed or when the contract expires. This is also realistic since the market will  no longer be liquid afterward.

If the defaultable claim is   underpriced by the market at all times, that is, $P_t \le \tilde{P}_t$,  $\forall t\le T$,  then we infer from (\ref{J_eq}) that $J_t=0$. This can be achieved at $\tau^*=T$  since price discrepancy must vanish at maturity, i.e. $P_T= \tilde{P}_T$.     In turn, this  implies that \[V_t =\E^{\tilde{\Q}}\big\{e^{-\int_t^{T}  r_v  dv}P_T|\G_t\big\}=\E^{\tilde{\Q}}\big\{e^{-\int_t^{T}  r_v  dv}\tilde{P}_T|\G_t\big\}=\tilde{P}_t.\]
In this case, there is no benefit to liquidate before maturity $T$.

According to \eqref{V_eq}, the  optimal liquidation timing directly depends on the investor's pricing measure $\tilde{\Q}$ as well as the market pricing measure $\Q$ (via the market price $P$).  Specifically, we observe that the discounted market price $(e^{-\int_0^{t} \! r_v  dv}P_t)_{0\le t\leq T}$ is a $(\Q, \mathbb{G})$-martingale, but generally not a  $(\tilde{\Q}, \mathbb{G})$-martingale. If the discounted market price is a $(\tilde{\Q}, \mathbb{G})$-\emph{supermartingale}, then it is optimal to sell the claim immediately. If the discounted market price turns out to be a $(\tilde{\Q}, \mathbb{G})$-\emph{submartingale}, then  it is optimal to delay the liquidation until maturity $T$. Besides these two scenarios, the optimal liquidation strategy may be non-trivial.

To quantify the value of optimally waiting to sell, we define the \emph{delayed liquidation premium}:
\begin{align}
L_t:=V_t-P_t\ge 0.\label{def_L}
\end{align}
It is often more intuitive to study the optimal liquidation timing in terms of the premium  $L$. Indeed, standard optimal stopping theory \citep[Appendix D]{KaratzasShreve01} suggests that  the optimal stopping time
$\tau^{\ast}$ for \eqref{V_eq} is  the first time the process $V$ reaches the reward $P$, namely,
\begin{align}
\tau^{\ast}&=\inf\{t\leq u \leq T: V_u=P_u\}=\inf\{t\leq u \leq T: L_u=0\}.\label{tau}
\end{align}The last equation, which follows directly from definition \eqref{def_L}, implies that the investor will liquidate as soon as the delayed liquidation premium vanishes. Moreover, we   observe from \eqref{VP_equal} and \eqref{tau} that $\tau^* \leq \tau_d$.

\section{Optimal Liquidation under Markovian Credit Risk Models}\label{sect-credit derivative}
We proceed to  analyze the optimal liquidation problem under a general class of Markovian credit risk models. The description of various pricing measures will involve the mark-to-market risk premium and event risk premium, which are crucial  in the characterization of the optimal liquidation strategy (see Theorem \ref{thm_main}).

\subsection{Pricing Measures and Default Risk Premia} \label{sect-premia}
We consider a  $n$-dimensional Markovian state vector process ${\bf X}$   that drives   the interest rate $r_t=r(t,{\bf X}_t)$ and  default intensity
$\hat{\lambda}_t=\hat{\lambda}(t,{\bf X}_t)$ for some positive measurable
functions $r(\cdot, \cdot)$ and $\hat{\lambda}(\cdot, \cdot)$.  Denote by $\mathbb{F}$ the filtration generated by ${\bf X}$.  We also assume   a Markovian payoff
structure for the defaultable claim $(Y, A, R, \tau_d)$  with $Y=Y({\bf
X}_T)$, $A_t=\int_0^tq(u,{\bf X}_u)du$,  and $R_t=R(t,{\bf X}_t)$ for some
measurable functions $Y(\cdot)$, $q(\cdot, \cdot)$, and $R(\cdot,\cdot)$ satisfying integrability conditions (\ref{integrability}).

Under the historical measure $\P$, the state vector process ${\bf X}$ satisfies the SDE
\begin{align}
d{\bf X}_t=a(t,{\bf X}_t)dt+\Sigma(t,{\bf X}_t)d{\bf W}^\P_t, \label{SDE_X_P}
\end{align}
where ${\bf W}^\P$ is a $m$-dimensional $\P$-Brownian motion, $a$ is the deterministic drift function, and $\Sigma$ is the $n$ by $m$ deterministic volatility function.  Standard Lipschitz and growth conditions \cite[\S 5.2]{KaratzasShreve91}  are assumed to  guarantee a unique solution to (\ref{SDE_X_P}).

Next, we consider the market pricing measure $\Q \sim \P$. To this end, we define  the Radon-Nikodym density process $(Z^{\Q, \P}_t)_{0\le t\leq T}$   by
\begin{equation}
Z^{\Q, \P}_t=\frac{d\Q}{d\P}\big|\mathcal{G}_t=\mathcal{E}\big(- {\boldsymbol \phi}^{\Q, \P}\!\cdot\! {\bf W}^\P\big)_t\,\mathcal{E}\big((\mu-1) M^\P\big)_t\,, \label{Z_t_P_Q}
\end{equation}
where the Dol\'eans-Dade exponentials are defined by
\begin{align}
\mathcal{E}\big(- {\boldsymbol \phi}^{\Q, \P}\!\cdot\! {\bf W}^\P\big)_t&:=\text{exp}\bigg(-\frac{1}{2}\int_0^t \! |\!| {\boldsymbol \phi}^{\Q, \P}_u |\!|^2 du-\int_0^t \! {\boldsymbol \phi}^{\Q, \P}_u \!\cdot\! d{\bf W}^\P_u\bigg),  \label{measure_W_PQ}\\
\mathcal{E}\big((\mu-1) M^\P\big)_t&:=\text{exp}\bigg(\int_0^t \! \text{log}(\mu_{u-})  dN_u-\int_0^t \!(1-N_u)(\mu_u-1)\hat{\lambda}_u du\bigg)\label{measure_M_P_Q},
\end{align}
and $M^\P_t:=N_t-\int_0^t \! (1-N_u)\hat{\lambda}_u du$ is the compensated $(\P, \mathbb{G})$-martingale associated with $N$. Here,   $({\boldsymbol \phi}_t^{\Q,\P})_{0\leq t\leq T}$ and $(\mu_t)_{0\leq t\leq T}$  are adapted processes  satisfying
$\int_0^T  |\!| {\boldsymbol \phi}^{ {\Q}, \P}_u |\!|^2 du<\infty$, $\mu \ge0$,  and  $\int_0^T \mu_u\hat{\lambda}_u du<\infty$ (see Theorem
4.8 of  \cite{Schonbucher2003}).

The process  ${\bf {\boldsymbol \phi}}^{\Q,\P}$ is commonly referred to as the \emph{mark-to-market risk premium} (see \citep{Azizpour2008}), which is assumed herein to be Markovian of the form ${\boldsymbol \phi}^{\Q,\P}(t, {\bf X}_t)$. The process $\mu$ is referred to as \emph{event risk premium} (see \citep{Azizpour2008,Jarrow2005}), which captures the compensation from the uncertain timing of default.
The $\Q$-default intensity, denoted by $\lambda$,  is related to $\P$-intensity via $\lambda_t=\mu_t\hat{\lambda}_t$.  Here, we also assume $\mu$ to be Markovian of the form $\mu(t,{\bf X}_t) ={\lambda(t,{\bf X}_t) }/{\hat{\lambda}(t,{\bf X}_t) }$.

By multi-dimensional Girsanov Theorem, it follows that ${\bf W}^{\Q}_t:={\bf W}^\P_t+\int_0^t{\boldsymbol \phi}_u^{\Q,\P}du$ is a $m$-dimensional $\Q$-Brownian motion, and $M^{\Q}_t:=N_t-\int_0^t (1-N_u)\mu_u\hat{\lambda}_u du$ is a $(\Q, \mathbb{G})$-martingale. Consequently, the $\Q$-dynamics of  ${\bf X}$ are given by
\begin{align}
d{\bf X}_t=b(t,{\bf X}_t)dt+\Sigma(t,{\bf X}_t)d{\bf W}^\Q_t,\label{SDE_X_Q}
\end{align}
where $b(t,{\bf X}_t):=a(t,{\bf X}_t)-\Sigma(t,{\bf X}_t){\boldsymbol
\phi}^{\Q,\P}(t,{\bf X}_t)$.

Similarly, the investor's pricing measure $\tilde{\Q}$ is related to the historical measure $\P$ through the investor's Markovian risk premium functions  ${\boldsymbol \phi}^{\tilde{\Q},\P}(t, {\bf x})$ and $\tilde{\mu}(t, {\bf x})$. Precisely, the measure $\tilde{\Q}$ is defined by the density process $
Z^{\tilde{\Q}, \P}_t=\mathcal{E}\big(- {\boldsymbol \phi}^{\tilde{\Q}, \P}\!\cdot\! {\bf W}^\P\big)_t\,\mathcal{E}\big((\tilde{\mu}-1) M^\P\big)_t$.
By a change of measure, the drift of ${\bf X}$ under $\tilde{\Q}$ is modified to $\tilde{b}(t,{\bf X}_t) :=a(t,{\bf X}_t)-\Sigma(t,{\bf X}_t){\boldsymbol \phi}^{\tilde{\Q},\P}(t,{\bf X}_t)$.

 Then, the EMMs  $\Q$ and $\tilde{\Q}$ are related by
the  Radon-Nikodym derivative:
\begin{equation}
Z^{\tilde{\Q}, \Q}_t=\frac{d\tilde{\Q}}{d\Q}\big|\mathcal{G}_t=\mathcal{E}\big(- {\boldsymbol \phi}^{\tilde{\Q}, \Q}\!\cdot\! {\bf W}^\Q\big)_t\,\mathcal{E}\big((\frac{\tilde{\mu}}{\mu}-1) M^\Q\big)_t\,, \label{Z_t}
\end{equation}
where the Dol\'eans-Dade exponentials are defined by
\begin{align}
\mathcal{E}\big(- {\boldsymbol \phi}^{\tilde{\Q}, \Q}\!\cdot\! {\bf W}^\Q\big)_t&:=\text{exp}\bigg(-\frac{1}{2}\int_0^t \! |\!| {\boldsymbol \phi}^{\tilde{\Q}, \Q}_u |\!|^2 du-\int_0^t \! {\boldsymbol \phi}^{\tilde{\Q}, \Q}_u \!\cdot\! d{\bf W}^\Q_u\bigg),  \label{measure_W}\\
\mathcal{E}\big((\frac{\tilde{\mu}}{\mu}-1) M^\Q\big)_t&:=\text{exp}\bigg(\int_0^t \! \text{log}(\frac{\tilde{\mu}_{u-}}{\mu_{u-}})  dN_u-\int_0^t \!(1-N_u)(\frac{\tilde{\mu}_u}{\mu_u}-1) {\lambda}_u du\bigg)\label{measure_M}.
\end{align}
We observe that ${\boldsymbol \phi}_t^{\tilde{\Q}, \Q} = {\boldsymbol \phi}_t^{\tilde{\Q},\P}-{\boldsymbol \phi}_t^{\Q,\P}$ from the decomposition:
\begin{align}
{\boldsymbol \phi}_t^{\tilde{\Q}, \Q}dt=d{\bf W}^{\tilde{\Q}}_t-d{\bf W}^{\Q}_t=(d{\bf W}^{\tilde{\Q}}_t-d{\bf W}^{\P}_t)-(d{\bf W}^{\Q}_t-d{\bf W}^{\P}_t)=({\boldsymbol \phi}_t^{\tilde{\Q},\P}-{\boldsymbol \phi}_t^{\Q,\P})dt.\label{decomp}
\end{align} Therefore, we can interpret ${\boldsymbol
\phi}^{\tilde{\Q}, \Q}$  as the incremental mark-to-market risk premium assigned by the investor relative to the market. On the other hand, the  discrepancy in event risk premia is accounted for in the second  Dol\'eans-Dade exponential \eqref{measure_M}.

\begin{example}\label{sect-OU}\emph{The OU Model.}  Suppose $ (r,\hat{\lambda}) ={\bf X}$, following  the OU dynamics:
\begin{align}
\begin{pmatrix} dr_t \\ d\hat{\lambda}_t\end{pmatrix}=\begin{pmatrix} \hat{\kappa}_r(\hat{\theta}_r-r_t) \\ \hat{\kappa}_{\lambda}(\hat{\theta}_{\lambda}-\hat{\lambda}_t)\end{pmatrix}dt+ \begin{pmatrix} \sigma_r & 0  \\\sigma_{\lambda}\rho & \sigma_{\lambda}\sqrt{1-\rho^2}\end{pmatrix} \begin{pmatrix} dW_t^{1,\P} \\ dW_t^{2,\P}\end{pmatrix},\label{P_state}
\end{align}
with constant parameters $\hat{\kappa}_r, \hat{\theta}_r, \hat{\kappa}_{\lambda}, \hat{\theta}_{\lambda}\geq 0$. Here,  $\hat{\kappa}_r$, $\hat{\kappa}_{\lambda}$ parameterize the speed of mean reversion, and $\hat{\theta}_r$, $\hat{\theta}_{\lambda}$ represent the long-term means (see \cite[\S 7.1.1]{Schonbucher2003}). Assuming a constant event risk premium $\mu$ by the market, the $\Q$-intensity is specified by $\lambda_t=\mu\hat{\lambda}_t$ and the pair $(r, \lambda)$ satisfies SDEs:
\begin{align}
\begin{pmatrix} dr_t \\ d\lambda_t\end{pmatrix}=\begin{pmatrix} \kappa_r( \theta_r-r_t) \\ \kappa_{\lambda}(\mu\theta_{\lambda}-\lambda_t)\end{pmatrix}dt+ \begin{pmatrix} \sigma_r & 0  \\\mu\sigma_{\lambda}\rho & \mu\sigma_{\lambda}\sqrt{1-\rho^2}\end{pmatrix} \begin{pmatrix} dW_t^{1,\Q} \\ dW_t^{2,\Q}\end{pmatrix}, \label{Q_state}
\end{align}
with constants $\kappa_r, \theta_r, \kappa_{\lambda}, \theta_{\lambda}\geq 0$. Under the investor's   measure $\tilde{\Q}$, the SDEs for $r_t$ and $\tilde{\lambda}_t= \tilde{\mu}\hat{\lambda}_t$ are of the same form with parameters $\tilde{\kappa}_r, \tilde{\theta}_r, \tilde{\kappa}_{\lambda}, \tilde{\theta}_{\lambda}$ and $\tilde{\mu}$,  and ${\bf W}^\Q$ is replaced by ${\bf W}^{\tilde{\Q}}$.

Direct computation yields the relative mark-to-market  risk premium:
\begin{align}
{\boldsymbol \phi}_t^{\tilde{\Q}, \Q}&= \begin{pmatrix} \frac{\kappa_r(\theta_r-r_t)-\tilde{\kappa}_r(\tilde{\theta}_r-r_t)}{\sigma_r} \nonumber\\ \frac{1}{\sqrt{1-\rho^2}}\frac{\kappa_{\lambda}(\theta_{\lambda}-\hat{\lambda}_t)-\tilde{\kappa}_{\lambda}(\tilde{\theta}_{\lambda}-\hat{\lambda}_t)}{\sigma_\lambda}-\frac{\rho}{\sqrt{1-\rho^2}}\frac{\kappa_r(\theta_r-r_t)-\tilde{\kappa}_r(\tilde{\theta}_r-r_t)}{\sigma_r}\end{pmatrix}.
\end{align}
The upper  term is the incremental risk premium for  the interest rate while the bottom term reflects the discrepancy  in the default risk premia (see \eqref{decomp}).
\end{example}

\begin{example}\label{sect-CIR}\emph{The CIR Model}. Let ${\bf X}=(X^1,\ldots,X^n)^T$ follow the multifactor CIR model \cite[\S 7.2]{Schonbucher2003}:
\begin{equation}
dX_t^i=\hat{\kappa}_i(\hat{\theta}_i-X_t^i)dt+\sigma_i \sqrt{X_t^i}\, dW_t^{i,\P},\label{P_state_CIR}
\end{equation}
where $W^{i,\P}$ are mutually independent $\P$-Brownian motions and $\hat{\kappa}_i$, $\hat{\theta}_i$, $\sigma_i \geq 0$, $i=1, \ldots, n$ satisfy Feller condition
$2\hat{\kappa}_i\hat{\theta}_i>\sigma_i^2$. The
interest rate $r$ and historical default intensity $\hat{\lambda}$ are
non-negative linear combinations of     $X^i$ with constant
weights $ w^r_i, w^{\lambda}_i \ge 0$, namely, $r_t=\displaystyle\sum_{i=1}^n w^r_iX_t^i$ and $\hat{\lambda}_t=\displaystyle\sum_{i=1}^n w^\lambda_iX_t^i.$
Under   measure $\Q$, $X^i$ satisfies the   SDE:
\begin{equation}
dX_t^i=\kappa_i(\theta_i-X_t^i)dt+\sigma_i \sqrt{X_t^i}\, dW_t^{i,\Q}, \label{Q_state_CIR}
\end{equation}
with  new mean reversion speed $\kappa_i$ and  long-run mean ${\theta_i}$.

Under the investor's measure $\tilde{\Q}$, the SDE for the state vector is of the same form with new parameters $\tilde{\kappa}_i$, ${\tilde{\theta}_i}$.  The associated relative mark-to-market risk premium has following structure:
\[\phi_{i,t}^{\tilde{\Q},\Q}=\frac{\kappa_i(\theta_i-X_t^i)-\tilde{\kappa}_i(\tilde{\theta}_i-X_t^i)}{\sigma_i \sqrt{X_t^i}}.\]
The event risk premia $(\mu, \tilde{\mu})$ are assigned via  $\lambda_t=\mu \hat{\lambda}_t$ under $\Q$ and $\tilde{\lambda}_t=\tilde{\mu}\hat{\lambda}_t$ under $\tilde{\Q}$ respectively.
\end{example}

\begin{remark}
The current framework can be readily generalized to the situation where the investor needs to assume an alternative  historical measure $\tilde{\P}$. The  resulting risk premium ${\boldsymbol \phi}^{\tilde{\Q}, \Q}$ will have a third decomposition component ${\boldsymbol \phi}^{\tilde{\P}, \P}$, reflecting the difference in historical dynamics.
\end{remark}

For any defaultable claim $(Y, A, R, \tau_d)$, the  ex-dividend pre-default market price is given by (see \citep[Prop. 8.2.1]{Bielecki2002})
\begin{align}
C(t,{\bf X}_t)=\E^{\Q}\big\{e^{-\int_t^T (r_v+\lambda_v)dv}Y({\bf X}_T)+\int_t^T e^{-\int_t^u (r_v+\lambda_v)dv}\big(\lambda_u R(u,{\bf X}_u)+q(u,{\bf X}_u)\big)du|\F_t\big\}.\label{pre-default def}
\end{align}
The associated cumulative price is related to the pre-default price via
\begin{align}
P_t=(1-N_t)C(t,{\bf X}_t)+\int_0^t(1-N_u)q(u,{\bf X}_u)e^{\int_u^t r_vdv}du+\int_{(0,t]} R(u,{\bf X}_u)e^{\int_u^t r_vdv}dN_u.  \label{Pt}
\end{align}
The price  function $C(t,{\bf x})$  can be determined by solving the PDE:
\begin{align}
\left\{ \begin{array}{ll}\displaystyle
\frac{\partial C}{\partial t}(t,{\bf x})+\mathcal{L}_{b, \lambda} C(t,{\bf x})+ \lambda(t,{\bf x}) R(t,{\bf x})+ q(t,{\bf x})=0, \quad (t,{\bf  x}) \in [0,T)\times \R^n,\\
\displaystyle C(T,{\bf x})=Y({\bf x}),\quad {\bf  x} \in \R^n,\label{PDE_C_general}
\end{array} \right.
\end{align}
where $\mathcal{L}_x$ is the operator defined by
\begin{align}\label{oper}
\mathcal{L}_{b, \lambda} f=\displaystyle\sum_{i=1}^n b_i(t,{\bf x})\frac{\partial f}{\partial x_i}+\frac{1}{2}\displaystyle\sum_{i,j=1}^n (\Sigma(t,{\bf x})\Sigma(t,{\bf x})^T)_{ij} \frac{\partial^2 f}{\partial {x_i}\partial {x_j}}-\big(r(t,{\bf x})+\lambda(t,{\bf x})\big)f.
\end{align}
The computation is similar for the investor's price  under $\tilde{\Q}$.
\subsection{Delayed Liquidation Premium and Optimal Timing}
Next, we analyze the optimal liquidation problem $V$ defined in
(\ref{V_eq})  for the general defaultable claim under the current Markovian
setting.

\begin{theorem} \label{thm_main} For a general defaultable claim  $(Y, A, R, \tau_d)$ under the Markovian credit risk model, the delayed liquidation premium admits the probabilistic representation:
\begin{align}
L_t=\1_{\{t<\tau_d\}}\esssup_{\tau \in \mathcal{T}_{t,T}}~\E^{\tilde{\Q}}\big\{\int_t^{\tau}  e^{-\int_t^u  (r_v+\tilde{\lambda}_v)  dv}G(u,{\bf X}_u) du |\F_t\big\},\label{L_formula}
\end{align}
where $G: [0,T]\times \R^n \mapsto \R$ is defined by
\begin{align}
G(t,{\bf x})&=-\big(\nabla_x C(t,{\bf x})\big)^T\Sigma(t,{\bf x}){\boldsymbol \phi}^{\tilde{\Q},\Q}(t,{\bf x})+\big(R(t,{\bf x})-C(t,{\bf x})\big)\big(\tilde{\mu}(t,{\bf x})-\mu(t,{\bf x})\big)\hat{\lambda}(t,{\bf x}). \label{G_general}
\end{align}
If $G(t,{\bf x})\geq0$ $\forall (t,{\bf x})$, then it is optimal to delay the liquidation till  maturity $T$.\\ If $G(t,{\bf x})\leq0$ $\forall (t,{\bf x})$, then  it is optimal to sell immediately.
\end{theorem}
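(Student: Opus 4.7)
The plan is to identify the $\tilde{\Q}$-drift of the discounted market price and recognize $G$ as its integrand, then translate the resulting $\mathbb{G}$-conditional expression into an $\mathbb{F}$-conditional one using the standard default-time formula.

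First, I would write the compounded-forward cumulative price from \eqref{Pt} as
\[
\beta_t P_t = \beta_t(1-N_t)C(t,\mathbf{X}_t) + \int_0^t (1-N_u)q(u,\mathbf{X}_u)\beta_u\,du + \int_{(0,t]}R(u,\mathbf{X}_u)\beta_u\,dN_u,
\]
with $\beta_t := e^{-\int_0^t r_v\,dv}$, and apply It\^o's formula to the product $\beta_t(1-N_t)C(t,\mathbf{X}_t)$ under $\Q$, using the $\Q$-dynamics \eqref{SDE_X_Q} of $\mathbf{X}$ and the jump at $\tau_d$. The pre-default PDE \eqref{PDE_C_general}--\eqref{oper} (which absorbs $-(r+\lambda)C+\lambda R + q$ into $\partial_t C + \mathcal{L}_{b,\lambda}C$) makes every $dt$-term cancel, leaving
\[
d(\beta_t P_t) = (1-N_t)\beta_t\bigl(\nabla_x C(t,\mathbf{X}_t)\bigr)^T\Sigma(t,\mathbf{X}_t)\,d\mathbf{W}^{\Q}_t + \beta_t\bigl(R(t,\mathbf{X}_t) - C(t-,\mathbf{X}_{t-})\bigr)\,dM^{\Q}_t,
\]
which confirms the $(\Q,\mathbb{G})$-martingale property and, more importantly, exposes the compensator structure.

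Next, I would switch to $\tilde{\Q}$ using $d\mathbf{W}^{\Q}_t = d\mathbf{W}^{\tilde{\Q}}_t - {\boldsymbol\phi}^{\tilde{\Q},\Q}(t,\mathbf{X}_t)\,dt$ and $dM^{\Q}_t = dM^{\tilde{\Q}}_t + (1-N_t)(\tilde{\mu}-\mu)(t,\mathbf{X}_t)\hat{\lambda}(t,\mathbf{X}_t)\,dt$, obtained from the Girsanov relations \eqref{measure_W}--\eqref{measure_M} and \eqref{decomp}. Collecting the $dt$-terms yields
\[
d(\beta_t P_t) = (1-N_t)\beta_t\,G(t,\mathbf{X}_t)\,dt + d\bigl(\tilde{\Q}\text{-local martingale}\bigr),
\]
with $G$ exactly as in \eqref{G_general}. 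Integrating between $t$ and a candidate stopping time $\tau\in\mathcal{T}_{t,T}$ and taking $\E^{\tilde{\Q}}[\,\cdot\,|\mathcal{G}_t]$ gives, after dividing by $\beta_t$ and using the standard localization argument to discard the martingale piece,
\[
\E^{\tilde{\Q}}\bigl\{e^{-\int_t^{\tau}r_v\,dv}P_\tau\mid\mathcal{G}_t\bigr\} - P_t = \E^{\tilde{\Q}}\Bigl\{\int_t^{\tau}(1-N_u)e^{-\int_t^u r_v\,dv}G(u,\mathbf{X}_u)\,du\Bigm|\mathcal{G}_t\Bigr\}.
\]
Taking the essential supremum over $\tau$ produces a $\mathbb{G}$-conditional representation of $L_t$.

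The main obstacle is the last step: converting the $\mathbb{G}$-conditional expression to the $\mathbb{F}$-conditional form \eqref{L_formula}. Here I would invoke the standard reduction under hypothesis (H): for any $\mathbb{G}$-stopping time $\tau$ there is an $\mathbb{F}$-stopping time $\tau'$ with $\tau\wedge\tau_d = \tau'\wedge\tau_d$, so the optimization can be restricted to $\mathcal{T}_{t,T}^{\mathbb{F}}$ on the survival set $\{t<\tau_d\}$ (and $L_t=0$ on $\{t\geq\tau_d\}$ by the lemma). The classical key identity
\[
\E^{\tilde{\Q}}\bigl\{\1_{\{\tau_d>u\}}H_u\mid\mathcal{G}_t\bigr\} = \1_{\{\tau_d>t\}}\,\E^{\tilde{\Q}}\bigl\{H_u\,e^{-\int_t^u\tilde{\lambda}_v\,dv}\mid\mathcal{F}_t\bigr\}
\]
for $\mathbb{F}$-adapted $H$ (Bielecki--Rutkowski, \cite[Cor.~5.1.1]{Bielecki2002}) then converts $(1-N_u)$ into the discount factor $e^{-\int_t^u\tilde{\lambda}_v\,dv}$ and replaces $\mathcal{G}_t$ by $\mathcal{F}_t$, yielding \eqref{L_formula}.

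Finally, the two corollary-style claims are immediate from \eqref{L_formula}: when $G\geq 0$ everywhere, the integrand is nonnegative so the supremum is attained at $\tau=T$; when $G\leq 0$ everywhere, the integrand is nonpositive so $\tau=t$ is optimal and $L_t=0$, meaning immediate liquidation is optimal.
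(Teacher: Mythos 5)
Your proposal is correct and follows essentially the same route as the paper: identify the $(\Q,\mathbb{G})$-martingale decomposition of the discounted cumulative price (the paper cites Corollary 2.2 of Bielecki et al.\ for what you derive directly via It\^o's formula and the pricing PDE), change measure to $\tilde{\Q}$ to expose the drift $(1-N_u)G(u,{\bf X}_u)$, and pass from the $\mathbb{G}$-conditional to the $\mathbb{F}$-conditional representation by the standard change-of-filtration identity, after which the sign conditions on $G$ give the two trivial-timing conclusions. Your treatment is in fact slightly more explicit than the paper's on the reduction of $\mathbb{G}$-stopping times to $\mathbb{F}$-stopping times on the survival set.
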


\begin{proof} First, we look at the $\tilde{\Q}$-dynamics of discounted market  price
$(e^{-\int_t^{u} \! r_v dv}P_u)_{t\leq u\leq T}$. Applying Corollary $2.2$ of \cite{Bielecki2008}, for  $t\leq u\leq T$,
\begin{align}
d(e^{-\int_t^u \! r_v dv}P_u)&=e^{-\int_t^u \! r_v dv}[(R_u-C_u)dM^\Q_u+(1-N_u)(\nabla_x C_u)^T\Sigma_u d{\bf W}^\Q_u]\label{P_SDE}\\
&=e^{-\int_t^u \! r_v  dv}\big((1-N_u)G(u,{\bf X}_u)du+(1-N_u)(\nabla_x C_u)^T\Sigma_u d{\bf W}^{\tilde{\Q}}_u+(R_u-C_u)dM^{\tilde{\Q}}_u\big),\notag\end{align}
where $G$ is defined in \eqref{G_general}, and  $M^{\tilde{\Q}}$ is the  compensated
$(\tilde{\Q}, \mathbb{G})$-martingale for $N$. Consequently,
\begin{align}
L_t=\esssup_{\tau \in \mathcal{T}_{t,T}}~\E^{\tilde{\Q}}\big\{\int_t^{\tau}  (1-N_u) e^{-\int_t^u  r_v dv}G(u,{\bf X}_u) du |\G_t\big\},\nonumber
\end{align}
where \eqref{L_formula} follows from the change of filtration technique \cite[\S 5.1.1]{Bielecki2002}. If $G\ge 0$, then the integrand in \eqref{L_formula} is positive a.s. and therefore the largest possible stopping time $T$ is optimal. If $G \le 0$, then $\tau^*=t$ is optimal and $L_t=0$ a.s.
\end{proof}

The drift function $G$ has two components explicitly depending on ${\boldsymbol
\phi}^{\tilde{\Q},\Q}$ and $\tilde{\mu}-\mu$.  If ${\boldsymbol
\phi}^{\tilde{\Q},\Q}(t,{\bf x})={\bf 0}$  $\forall (t,{\bf x})$, that is, the investor and market agree on the mark-to-market risk premium, then the sign of
$G$ is solely determined by the difference $\tilde{\mu}-\mu$, since recovery $R$ in general is less than the
pre-default price $C$. On the other hand, if $\mu(t,{\bf x})=\tilde{\mu}(t,{\bf x})$ $\forall (t,{\bf
x})$, then the second term  of $G$
vanishes but $G$ still depends on $\mu$ through $\nabla_xC$ in the first term.

Theorem \ref{thm_main} allows us to conclude the optimal liquidation timing when  the drift function is   of constant sign. In other cases, the optimal
liquidation policy may be non-trivial and needs to be numerically
determined. For this purpose, we  write $L_t = \1_{\{t<\tau_d\}} \hat{L}(t,{\bf X}_t)$, where $\hat{L}$ is the (Markovian) pre-default delayed liquidation premium defined by
\begin{align}
\hat{L}(t,{\bf X}_t)=\esssup_{\tau \in \mathcal{T}_{t,T}}~\E^{\tilde{\Q}}\big\{\int_t^{\tau}  e^{-\int_t^u  (r_v+\tilde{\lambda}_v)  dv}G(u,{\bf X}_u) du |\F_t\big\}.\label{L_predef}
\end{align}
We determine  $\hat{L}$ from  the variational inequality:
\begin{align}
\text{min}\bigg(-\frac{\partial \hat{L}}{\partial t}(t,{\bf x})-\mathcal{L}_{\tilde{b},\tilde{\lambda}} \hat{L}(t,{\bf x})-G(t,{\bf x}), \ \hat{L}(t,{\bf x})\bigg)=0, \quad (t,{\bf x}) \in [0,T) \times \mathbb{R}^n, \label{L_general_VI}
\end{align}
where $\mathcal{L}_{\tilde{b},\tilde{\lambda}}$ is defined in \eqref{oper}, and the  terminal condition is
$\hat{L}(T,{\bf x})=0$, for $ {\bf x} \in \mathbb{R}^n$.

The investor's
optimal timing is characterized by the sell region $\mathcal S$ and delay
region $\mathcal D$, namely,
\begin{align}
\mathcal S&=\{(t,{\bf x})\in [0,T]\times \R^n : \ \hat{L}(t,{\bf x})=0\}, \label{S_region}\\
\mathcal D&=\{(t,{\bf x})\in [0,T]\times \R^n : \ \hat{L}(t,{\bf x})>0\}.\label{D_region}
\end{align}
Also, define $\hat{\tau}^{\ast}=\inf\{t\leq u \leq T: \hat{L}_u=0\}$.  On  $\{\hat{\tau}^{\ast}\geq \tau_d\}$, liquidation occurs at $\tau_d$  since $L_{\tau_d}=0$. On $\{\hat{\tau}^{\ast}<\tau_d\}$, $\hat{\tau}^{\ast}$ is optimal since when $u<\hat{\tau}^{\ast}$, $L_u=\1_{\{u<\tau_d\}} \hat{L}_u>0$ and $L_{\hat{\tau}^{\ast}}=0$. Incorporating  the observation of $\tau_d$, the optimal stopping time is  $\tau^{\ast}=\hat{\tau}^{\ast}\wedge \tau_d$.

Hence,  given no default by time $t$ and  ${\bf X}_t = {\bf x}$,  it is optimal to wait at the current time $t$ if $\hat{L}(t,{\bf x})>0$ in view of the  delay region $\mathcal{D}$ in \eqref{D_region}. This is also intuitive as there is a strictly positive premium for delaying liquidation.  On the other hand, the sell region $\mathcal{S}$ must lie within the set ${G_-}:=\{(t,{\bf x}) \,:\,  G(t,{\bf x}) \le 0\}$. To see this, we infer from  \eqref{L_predef} that, for any given point $(t,{\bf x})$ such that  $\hat{L}(t,{\bf x})=0$, we must have  $G(t,{\bf x}) \le 0$. In turn, the delay region $\mathcal{D}$ must contain the set ${G_+}:=\{(t,{\bf x}) \,:\,  G(t,{\bf x}) > 0\}$.  From these observations,  one can obtain some insights about the sell and delay regions by inspecting $G(t,{\bf x})$, which is much easier to compute than $\hat{L}(t,{\bf x})$. We shall illustrate this numerically in Figures  1-4.

Lastly, let us consider   a special example where the stochastic factor ${\bf X}$ is absent from the model. With reference to \eqref{pre-default def}, we set a constant terminal payoff $Y$, and   deterministic recovery  $R(t)$ and coupon rate $q(t)$.
Suppose the investor and market perceive the same  deterministic interest rate $r(t)$, but possibly   different deterministic  default intensities, respectively, $\tilde{\lambda}(t) = \tilde{\mu}(t) \hat{\lambda}(t)$ and $\lambda(t)= \mu(t) \hat{\lambda}(t)$. In this case, the   price function $C$  in \eqref{G_general} will depend only on $t$ but not on ${\bf x}$, and there will be no mark-to-market risk premium. Therefore,  the first term of drift function in \eqref{G_general} will vanish. However,   the second term remains due to potential discrepancy in event risk premium, i.e. $\tilde{\mu}(t) \neq \mu(t)$. As a result, the drift function reduces to \[G(t) = (R(t) - C(t))(\tilde{\mu}(t) - \mu(t))\hat{\lambda}(t).\] Furthermore, the absence of the stochastic factor ${\bf X}$ also trivializes the filtration $\mathbb{F}$, and leads the investor to optimize over only constant times. The delayed liquidation premium admits the form: $L_t = \1_{\{t<\tau_d\}} \hat{L}(t)$, where $\hat{L}(t)$ is a deterministic function given by \begin{align}
\hat{L}(t)=\sup_{t \le \hat{t}\le T}\int_t^{\hat{t}}  e^{-\int_t^u  (r(v)+\tilde{\lambda}(v))  dv}G(u) du.\label{Ltdeter}
\end{align} As in Theorem \ref{thm_main}, if $G$ is always positive (resp. negative) over $[t,T]$, then the optimal time $\hat{t}^*=T$ (resp. $\hat{t}^*=t$). Otherwise, differentiating the integral in \eqref{Ltdeter} implies  that the deterministic candidate times also include   the roots of $G(\hat{t}) = 0$. Therefore, we select among the candidate times $t, T$ and the roots of $G$ to see which would yield the largest  integral value in  \eqref{Ltdeter}.

\section{Application to Single-Name Credit Derivatives}\label{sect-selling}
We proceed to illustrate  our analysis for a number  of credit derivatives, with an emphasis on  how risk premia discrepancy affects the optimal liquidation strategies.

\subsection{Defaultable Bonds with Zero Recovery}\label{sect-bond_zeroR}
Consider a defaultable zero-coupon zero-recovery bond with face value $1$ and maturity $T$.  By a  change of filtration \cite[\S 5.1.1]{Bielecki2002}, the market price of the zero-coupon zero-recovery bond is given by
\begin{align}
P^0_t:=\E^\Q\big\{e^{-\int_t^T \! r_v dv}\1_{\{\tau_d>T\}}|\G_t\big\}=\1_{\{t<\tau_d\}}\,\E^\Q\big\{e^{-\int_t^T \! (r_v+\lambda_v) dv}|\F_t\big\}=\1_{\{t<\tau_d\}}C^0(t,{\bf X}_t),
\end{align}
where $C^0$ denotes the market pre-default price that solves \eqref{PDE_C_general}. Under the general Markovian credit risk model in Section \ref{sect-premia}, we can apply Theorem \ref{thm_main} with the quadruple $(1, 0, 0, \tau_d)$  to obtain the corresponding drift function.

Under the OU dynamics in Section \ref{sect-OU}, the pre-default price function $C^0(t,r,\lambda)$ is given explicitly by \cite[\S 7.1.1]{Schonbucher2003}:
\begin{equation}
C^0(t,r,\lambda)=e^{A(T-t)-B(T-t)r-D(T-t)\lambda}, \label{C_OU}
\end{equation}
where
\begin{align}
B(s)&=\frac{1-e^{-\kappa_r s}}{\kappa_r},\quad D(s)=\frac{1-e^{-\kappa_\lambda s}}{\kappa_\lambda},\label{B_OU}\\
A(s)&=\int_0^s \! \big[\frac{1}{2}\sigma_r^2 B^2(z)+\rho\mu\sigma_r\sigma_\lambda B(z)D(z)+\frac{1}{2}\mu^2\sigma_\lambda^2 D^2(z)-\kappa_r\theta_r B(z)-\mu\kappa_\lambda\theta_\lambda D(z)\big]  dz.\nonumber
\end{align}
As a result, the drift function $G^0(t,r,\lambda)$ admits a separable form:
\begin{align}
G^0(t,r,\lambda)&=C^0(t,r,\lambda)\bigg(B(T-t)(\tilde{\kappa}_r-\kappa_r)r+B(T-t)(\kappa_r\theta_r-\tilde{\kappa}_r\tilde{\theta}_r)\nonumber\\
&+[ D(T-t)(\tilde{\kappa}_\lambda-\kappa_\lambda)-(\frac{\tilde{\mu}}{\mu}-1)]\lambda+ \mu D(T-t)(\kappa_\lambda\theta_\lambda-\tilde{\kappa}_\lambda\tilde{\theta}_\lambda)\bigg) \label{G_OU}.
\end{align}

We can draw several insights on the liquidation timing from this drift function. If the market and the investor agree on the speed of mean reversion for interest rate, i.e. $\kappa_r=\tilde{\kappa}_r$, then $G^0(t,r,\lambda)/C^0(t,r,\lambda)$ is  linear  in $\lambda$. Furthermore, if the slope $D(T-t)(\tilde{\kappa}_\lambda-\kappa_\lambda)-(\frac{\tilde{\mu}}{\mu}-1)$ and intercept $B(T-t)(\kappa_r\theta_r-\tilde{\kappa}_r\tilde{\theta}_r)+\mu D(T-t)(\kappa_\lambda\theta_\lambda-\tilde{\kappa}_\lambda\tilde{\theta}_\lambda)$ are of the same sign, then the optimal liquidation strategy must be trivial in view of Theorem \ref{thm_main}. In contrast, if the slope and intercept differ in signs, the optimal stopping problem may be nontrivial and the sign of the slope determines qualitative properties of optimal stopping rules.  For  instance, suppose the slope is positive. We infer that it is optimal for the holder to wait at high default  intensity where the corresponding $G^0$ and thus delayed liquidation premium are positive.  The converse holds if the slope is negative.

If the investor disagrees with market only on event risk premium, i.e. $\mu \neq \tilde{\mu}$, then the drift function is reduced to $G^0(t,r,\lambda)=-C^0(t,r,\lambda)(\frac{\tilde{\mu}}{\mu}-1)\lambda$, which is of constant sign. This implies trivial strategies. If $\mu>\tilde{\mu}$, then $G^0>0$ and it is optimal to delay the liquidation until maturity. On the other hand, if $\mu<\tilde{\mu}$, then it is optimal to sell immediately. More general specifications of the event risk premium could depend on the state vector and may lead to nontrivial optimal stopping rules. Disagreement on mean level $\theta_\lambda$ has a similar effect to that of $\mu$.

If the investor disagrees with market only on speed of mean reversion, i.e. $\kappa_\lambda \neq \tilde{\kappa}_\lambda$, then $G^0(t,r,\lambda)=C^0(t,r,\lambda)D(T-t)\big[ (\tilde{\kappa}_\lambda-\kappa_\lambda)\lambda+\mu\theta_\lambda(\kappa_\lambda-\tilde{\kappa}_\lambda)\big]$ with $D(T-t)>0$ before $T$, where the slope and intercept differ in signs. If $\kappa_\lambda<\tilde{\kappa}_\lambda$, the slope $\tilde{\kappa}_\lambda-\kappa_\lambda$ is positive and it is optimal to sell immediately at a low intensity, and thus, a high bond price. The converse holds for $\kappa_\lambda>\tilde{\kappa}_\lambda$.

 We consider a numerical example where the interest rate is constant and the market default intensity $\lambda$ is chosen as the state vector ${\bf X}$ with OU dynamics. We employ the standard implicit PSOR algorithm to solve $\hat{L}(t,\lambda)$ through its variational inequality (\ref{L_general_VI}) over a uniform finite grid  with Neumann condition applied on the intensity boundary. The market parameters are $T=1$, $\mu=2$, $\kappa_\lambda=0.2$, $\theta_\lambda=0.015$, $r=0.03$, and $\sigma=0.02$, which are based on the estimates in \cite{Driessen2005,Duffee1999}.

 From formula (\ref{C_OU}), we observe  a one-to-one correspondence between the market pre-default bond price $C^0$ and its default intensity $\lambda$ for any fixed $(t,r)$, namely,
\begin{align}
\lambda=\frac{-\text{log}(C^0)+A(T-t)-B(T-t)r}{D(T-t)}. \label{inverse}
\end{align}
Substituting (\ref{inverse}) into (\ref{S_region}) and (\ref{D_region}), we can characterize the sell region and delay region in terms of the observable pre-default market  price $C^0$.

In the left panel of Figure \ref{fig1}, we assume that the investor agrees with the market on all parameters, but has a higher speed of mean reversion $\tilde{\kappa}_\lambda> \kappa_\lambda$.   In this case, the investor tends to sell the bond at a high market price, which is consistent with our previous analysis in terms of drift function.  If the bond price starts below $0.958$ at time $0$, the optimal liquidation strategy for the investor is to hold and sell the bond as soon as the price hits the optimal boundary. If the bond price starts above $0.958$ at time $0$, the optimal liquidation strategy is to sell immediately.  In the opposite case where  $\tilde{\kappa}_\lambda <\kappa_\lambda$ (see Figure \ref{fig1}(right)), the optimal liquidation strategy is reversed -- it is optimal to sell at a lower boundary. In each cases, the sell region must lie within where $G$ is non-positive, and the straight line defined by $G=0$ can be viewed as a linear approximation of the optimal liquidation boundary.

Under the CIR dynamics in Section \ref{sect-OU}, $C^0$ admits closed-form formula \cite[\S 7.2]{Schonbucher2003}
\begin{align}
C^0(t,{\bf x})=\prod_{i=1}^n \E^\Q\big\{e^{-\int_t^T \! (w^r_i+\mu w^\lambda_i)X_{v}^i  dv}|{\bf X}_t={\bf x}\big\}=\prod_{i=1}^n A_i(T-t)e^{-B_i(T-t)x_i},\label{C_CIR}
\end{align}
where
\begin{align}
A_i(s)&=[\frac{2\Xi_i e^{(\Xi_i+\kappa_i)s/2}}{(\Xi_i+\kappa_i)(e^{\Xi_i s}-1)+2\Xi_i}]^{2\kappa_i\theta_i/\sigma_i^2},\label{A_CIR}\\
B_i(s)&=\frac{2(e^{\Xi_i s}-1)(w^r_i+\mu w^\lambda_i)}{(\Xi_i+\kappa_i)(e^{\Xi_i s}-1)+2\Xi_i}, \quad\, \text{ and }\, \quad \Xi_i=\sqrt{\kappa_i^2+2\sigma_i^2(w^r_i+\mu w^\lambda_i)}\label{Xi}.
\end{align}

\begin{figure}[ht]
\centering
\begin{tabular}{cc}
\includegraphics[scale=0.47]{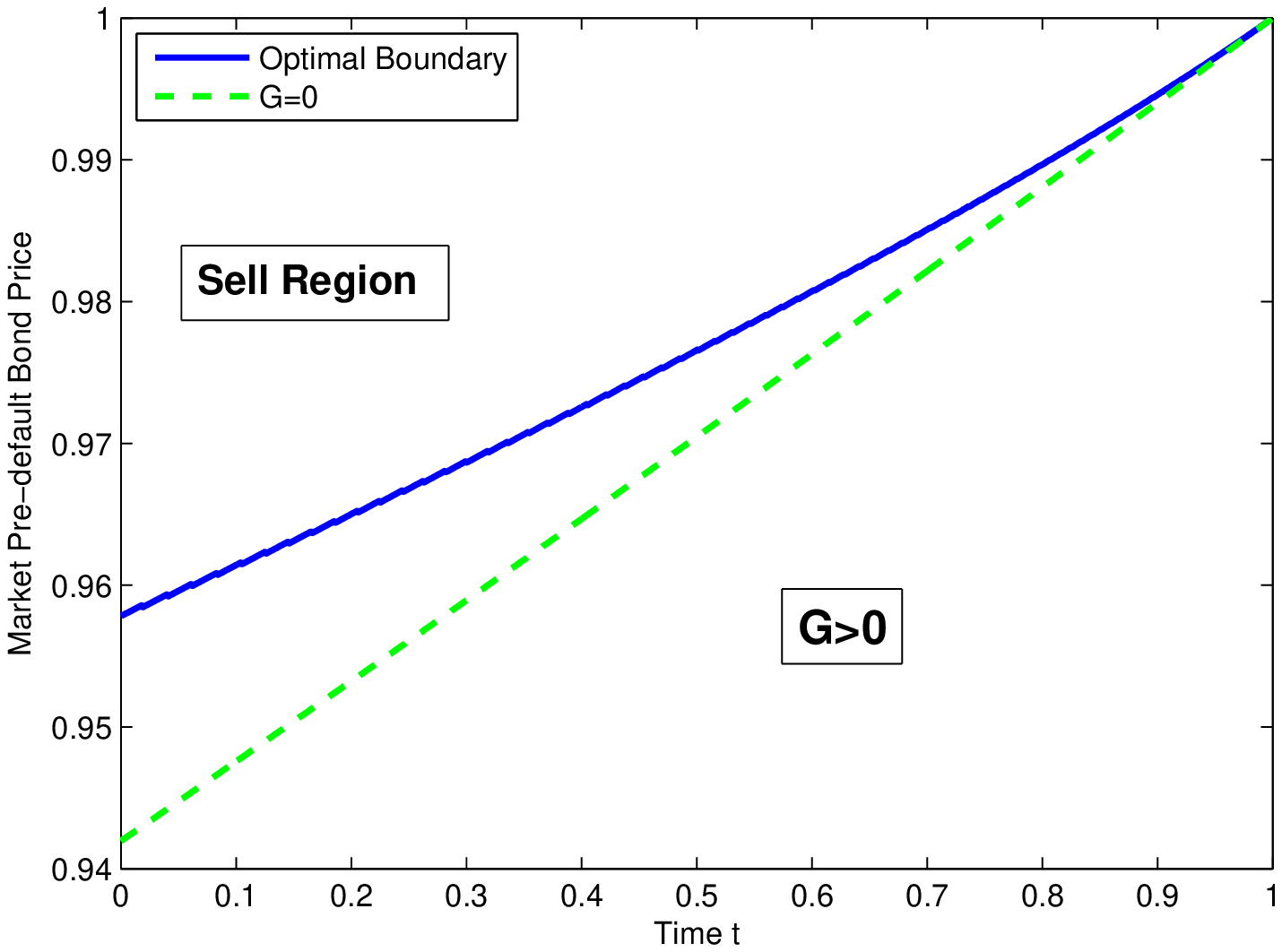} &
\includegraphics[scale=0.47]{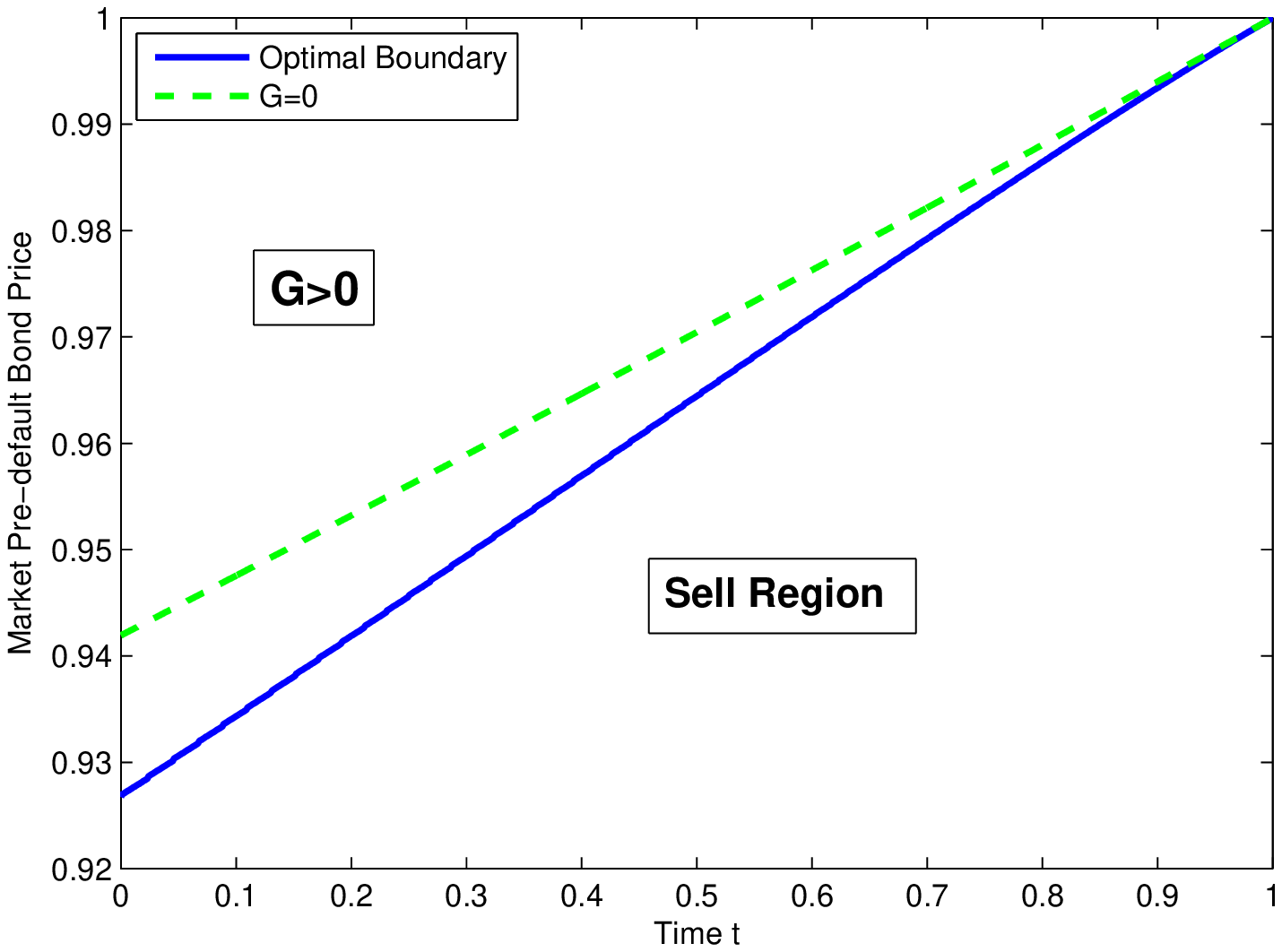}
\end{tabular}
\caption{\small{Optimal liquidation boundary in terms of market pre-default bond price under OU dynamics. We take $T=1$, $r=0.03$, $\sigma=0.02$, $\mu=\tilde{\mu}=2$, and $\theta_\lambda=\tilde{\theta}_\lambda=0.015$. \emph{Left panel}: When $\kappa_\lambda=0.2<0.3=\tilde{\kappa}_\lambda$, the optimal boundary increases from $0.958$ to $1$ over time. \emph{Right panel}: When $\kappa_\lambda=0.3>0.2=\tilde{\kappa}_\lambda$, the optimal boundary increases from $0.927$ to $1$ over time. The dashed straight line is defined by $G=0$, and we have $G\le 0$ in both sell regions.}}
\label{fig1}
\end{figure}

\begin{figure}[ht]
\centering
\begin{tabular}{cc}
\includegraphics[scale=0.47]{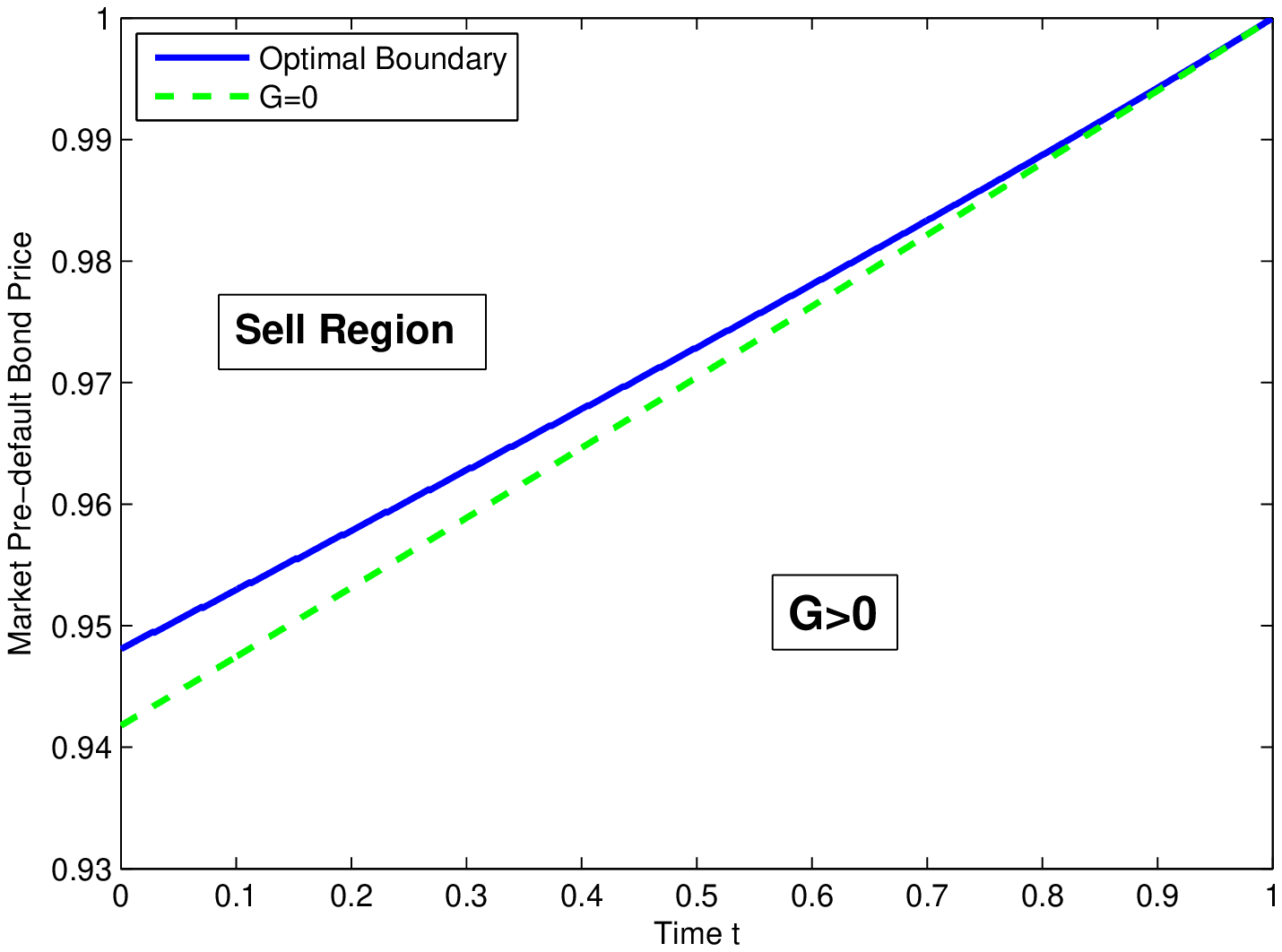} &
\includegraphics[scale=0.47]{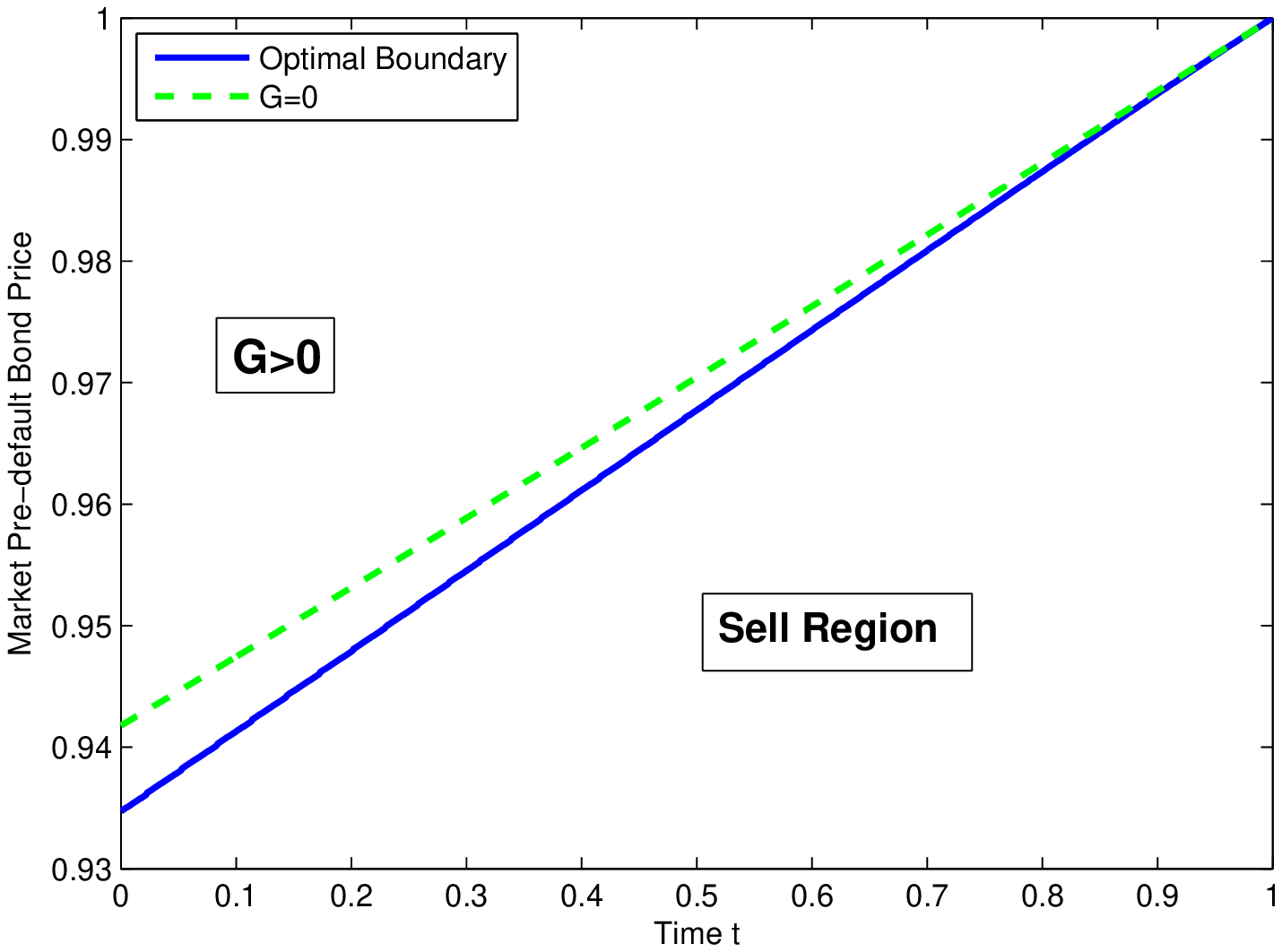} \\
\end{tabular}
\caption{\small{Optimal liquidation boundary in terms of market pre-default bond price under CIR dynamics. We take $T=1$, $r=0.03$, $\sigma=0.07$, $\mu=\tilde{\mu}=2$,  and $\theta_\lambda=\tilde{\theta}_\lambda=0.015$. \emph{Left panel}: When $\kappa_\lambda=0.2<0.3=\tilde{\kappa}_\lambda$, the optimal boundary increases from $0.948$ to $1$ over time. \emph{Right panel}: When $\kappa_\lambda=0.3>0.2=\tilde{\kappa}_\lambda$, the optimal boundary increases from $0.935$ to $1$ over time.}}
\label{fig2}
\end{figure}

As a result, the drift function is given by
    \begin{align}
G^0(t,{\bf x})=\big[\sum_{i=1}^n \big([B_i(T-t)(\tilde{\kappa}_i-\kappa_i)-(\tilde{\mu}-\mu)w^{\lambda}_i ]x_i+B_i(T-t)(\kappa_i\theta_i-\tilde{\kappa}_i\tilde{\theta}_i)\big)\big]C^0(t,{\bf x}),\nonumber
\end{align}which is again linear in terms of $C^0(t,{\bf x})$.

To illustrate the optimal liquidation strategy, we consider a numerical example where interest rate is constant, ${\bf X}$=$\lambda$, and ${\bf w}^{\lambda}=\frac{1}{\mu}$. The benchmark specifications for the market default intensity $\lambda$ in the CIR dynamics are $T=1$, $\mu=2$, $\kappa_\lambda=0.2$, $\theta_\lambda=0.015$, $r=0.03$, and $\sigma=0.07$ based on the estimates from \cite{Driessen2005,Duffee1999}. Like in the OU model, we can again express the sell region and delay region in terms of the pre-default market price $C^0$; see Figure \ref{fig2}.

\subsection{Recovery of Treasury and Market Value}\label{RT}
Extending the preceding analysis on defaultable bonds, we incorporate two principle ways of modeling recovery: the recovery of treasury or  market value.

By the recovery of treasury, we assume  that a recovery of $c$ times the value of the equivalent default-free bond is paid upon default. Therefore, the market pre-default bond price function is
\begin{equation}
C^{RT}(t,{\bf x})=(1-c)C^0(t,{\bf x})+c\beta(t,{\bf x}), \label{C_RT}
\end{equation}
where $\beta(t,{\bf x}):=\E^\Q\big\{e^{-\int_t^T \! r_v  dv}|{\bf X}_t={\bf x}\big\}$ is the  equivalent default-free bond price. Then, applying Theorem \ref{thm_main} with the quadruple $(1, 0, c\beta, \tau_d)$, we obtain the corresponding drift function:
\begin{align}
G^{RT}(t,{\bf x})=-\big(\nabla_x C^{RT}(t,{\bf x})\big)^T\Sigma(t,{\bf x}){\boldsymbol \phi}^{\tilde{\Q},\Q}(t,{\bf x})+(c-1)\big(\tilde{\mu}(t,{\bf x})-\mu(t,{\bf x})\big)\hat{\lambda}(t,{\bf x})C^0(t,{\bf x}).\label{G_RT}
\end{align}
 If $c=0$, then $C^{RT}(t,{\bf x})=C^0(t,{\bf x})$ and  $G^{RT}$ in \eqref{G_RT} reduces to the drift function of the zero-recovery bond. If $c=1$, then $C^{RT}(t,{\bf x})=\beta(t,{\bf x})$ is the market price of a default-free bond, and risk premium discrepancy may arise only from the interest rate dynamics.

Here are  two  examples where  the drift function $G^{RT}$ in (\ref{G_RT}) can be computed explicitly.
\begin{example}
Under OU model, $C^{RT}(t,r,\lambda)$ is computed according to (\ref{C_RT}) with $C^0(t,r,\lambda)$ in ($\ref{C_OU}$) and $\beta(t,r, \lambda)=e^{\bar{A}(T-t)-B(T-t)r}$,
where $\bar{A}(s)=\int_0^s \! \big[\frac{1}{2}\sigma_r^2 B^2(z)-\kappa_r\theta_r B(z)\big] \, dz$ and $B(s)$ is defined in (\ref{B_OU}).
\end{example}
\begin{example}
Under the multi-factor CIR model, $C^{RT}(t,{\bf x})$ is found again from (\ref{C_RT}), where $C^0(t,{\bf x})$ is given in ($\ref{C_CIR}$), and $\beta(t,{\bf x})$ is computed from ($\ref{C_CIR}$) with ${\bf w}^\lambda={\bf 0}$ in ($\ref{A_CIR}$) and ($\ref{Xi}$).
\end{example}

As for the recovery of market value, we assume that at default the recovery is $c$ times the pre-default value $C^{RMV}_{\tau_d -}$. The market pre-default price is given by
\begin{align}
C^{RMV}(t,{\bf X}_t)=\E^\Q\big\{e^{-\int_t^T \! (r_v+(1-c)\lambda_v)  dv}|\F_t\big\},\quad 0\leq t\leq T.
\end{align}
The  corresponding drift function can be obtained by applying the quadruple $(1, 0, cC^{RMV}, \tau_d)$ to  Theorem \ref{thm_main}.

\begin{example}
Under the OU model in Section \ref{sect-OU}, the price function
$C^{RMV}(t,r,\lambda)$ is given by
\begin{equation}
C^{RMV}(t,r,\lambda)=e^{\hat{A}(T-t)-B(T-t)r-\hat{D}(T-t)\lambda}, \nonumber
\end{equation}
where $B(s)$ is defined in (\ref{B_OU}),
\begin{align}
\hat{D}(s)&=\frac{(1-c)(1-e^{-\kappa_\lambda s})}{\kappa_\lambda}, ~\text{ and}\nonumber\\
\hat{A}(s)&=\int_0^s \! \big[\frac{1}{2}\sigma_r^2 B^2(z)+\rho\mu\sigma_r\sigma_\lambda B(z)\hat{D}(z)+\frac{1}{2}\mu^2\sigma_\lambda^2 \hat{D}^2(z)-\kappa_r\theta_r B(z)-\mu\kappa_\lambda\theta_\lambda \hat{D}(z)\big] dz.\nonumber
\end{align}
\end{example}

\begin{example}
Under the multi-factor CIR model, $C^{RMV}(t,{\bf x})$ admits the same
formula as ($\ref{C_CIR}$) but with ${\boldsymbol w^{\lambda}}$
replaced by $(1-c){\boldsymbol w^{\lambda}}$ in ($\ref{A_CIR}$) and ($\ref{Xi}$).
\end{example}

\subsection{Optimal Liquidation of CDS}
 In this section we consider optimally liquidating a digital CDS position.  The investor is a protection buyer who pays a fixed premium to the protection seller from time $0$ until default or maturity $T$, whichever comes first. The premium rate $p^m_0$, called the market spread, is specified at contract inception. In return, the protection buyer will receive  \$$1$  if  default occurs at or before $T$. The liquidation of the CDS position at time $t$ can be achieved by entering a CDS contract as a protection seller with the same credit reference and same maturity $T$ at the prevailing market spread $p^m_t$.
By definition, the prevailing market spread $p^m_t$  makes the values of two legs equal at time $t$, i.e.
\begin{align}
\E^\Q\big\{\int_t^T \! e^{-\int_t^u \! r_v  dv}p^m_t\1_{\{u<\tau_d\}} du|\G_t\big\}=\E^\Q\big\{e^{-\int_t^{\tau_d} \! r_v  dv}\1_{\{t<\tau_d\leq T\}}|\G_t\big\}.\label{marketspread}\end{align}
If the liquidation occurs at time $t$, she receives the premium at rate $p^m_t$ and pays the premium at rate $p^m_0$ until default or maturity $T$. If default occurs,  then the default payments from both CDS contracts will cancel. Considering the resulting expected cash flows and \eqref{marketspread},  the mark-to-market value of the  CDS is given by
\begin{align}
\E^\Q\big\{\int_t^T \! e^{-\int_t^u \! r_v  dv}(p^m_t-p^m_0)\1_{\{u<\tau_d\}} du|\G_t\big\}=&\1_{\{t<\tau_d\}}\E^\Q\big\{\int_t^T \! e^{-\int_t^u \! (r_v+\lambda_v)  dv}(\lambda_u-p^m_0) du|\F_t\big\}\notag\\
=&:\1_{\{t<\tau_d\}}C^{CDS}(t,{\bf X}_t).\label{CCDSprice}
\end{align}

For CDS, we apply the quadruple $(0, -p^m_0, 1, \tau_d)$ to   Theorem \ref{thm_main} and obtain the drift function:
\begin{align}
G^{CDS}(t,{\bf x})=-\big(\nabla_x C^{CDS}(t,{\bf x})\big)^T\Sigma(t,{\bf x}){\boldsymbol \phi}^{\tilde{\Q},\Q}(t,{\bf x})+\big(1-C^{CDS}(t,{\bf x})\big)\big(\tilde{\mu}(t,{\bf x})-\mu(t,{\bf x})\big)\hat{\lambda}(t,{\bf x}).\label{G_CDS}
\end{align}
If there is no discrepancy over mark-to-market risk premium, i.e. ${\boldsymbol
\phi}^{\tilde{\Q},\Q}(t,{\bf x})={\bf 0}$, then the sign of $G^{CDS}$ is determined by $\tilde{\mu}(t,{\bf x})-\mu(t,{\bf x})$ since  $C^{CDS}\leq1$. From this we infer that higher event risk premium (relative to market) implies delayed liquidation.

In general, the market pre-default value $C^{CDS}$ can be solved by PDE (\ref{PDE_C_general}).
If the state vector ${\bf X}$ admits OU or CIR dynamics,  $C^{CDS}$, and thus  $G^{CDS}$,  is given in closed form, as illustrated in the following two examples.

\begin{example} \label{ex_CDS_OU}
Under the OU dynamics, the pre-default value of CDS (see \eqref{CCDSprice}) is given by the following integral:
\begin{align}
C^{CDS}(t,r,\lambda)=\int_t^T \!C^0(t,r,\lambda; u)\bigg[\lambda e^{-\kappa_\lambda(u-t)}+\int_t^u \! e^{-\kappa_r(u-s)}g(s,u) ds -p^m_0 \bigg]  du\nonumber,
\end{align}
where $C^0(t, r,\lambda; u)$ is given by ($\ref{C_OU}$) with $T=u$ and
\begin{align}
g(s,u):=\mu\kappa_\lambda\theta_\lambda-\rho\mu\sigma_r\sigma_\lambda\frac{1-e^{-\kappa_r(u-s)}}{\kappa_r}-(\mu\sigma_\lambda)^2
\frac{1-e^{-\kappa_\lambda(u-s)}}{\kappa_\lambda}.\nonumber
\end{align}
\end{example}
\begin{example} \label{ex_CDS_CIR}
Under the multi-factor CIR dynamics, the pre-default value of CDS is given by the following integral:
\begin{align}
C^{CDS}(t,{\bf x})=\int_t^T \!C^0(t,{\bf x}; u)\bigg[ \displaystyle\sum_{i=1}^n \bigg(\mu w^\lambda_i\big(\kappa_i\theta_iB_i(u-t)+ B_i^\prime(u-t)x_i\big)\bigg) -p^m_0 \bigg]  du\nonumber,
\end{align}
where $C^0(t,{\bf x}; u)$ is given in  ($\ref{C_CIR}$) with $T=u$ and $B_i(s)$ in (\ref{Xi}). See Chapter $7$ of \cite{Schonbucher2003}.
\end{example}
\begin{example} For a \emph{forward}  CDS with  start date $T_a<T$, the protection buyer pays premium at rate $p_a$ from $T_a$ until $\tau_d$ or maturity $T$, and receives $1$ if $\tau_d \in [T_a, T]$.  By direct computation,  the pre-default market value  is $C^{CDS}(t,{\bf x};T)-C^{CDS}(t,{\bf x};T_a)$,   $t<T_a$. Consequently, closed-form formulas  for the drift function  are available under OU or CIR dynamics  by Examples \ref{ex_CDS_OU} and \ref{ex_CDS_CIR}.
\end{example}

We consider a numerical example where interest rate is constant and state vector ${\bf X}$=$\lambda$ follows the CIR dynamics. We assume that the investor agrees with the market on all parameters except the speed of mean reversion for default intensity. In the left panel of Figure \ref{fig3} with $\kappa_\lambda=0.2<0.3=\tilde{\kappa}_\lambda$, the optimal liquidation strategy is to sell as soon as the market CDS value reaches an upper boundary. In the case with   $\kappa_\lambda=0.3>0.2=\tilde{\kappa}_\lambda$ (see Figure \ref{fig3} (right)), the sell region  is below the continuation region.

\begin{figure}[ht]
\centering
\begin{tabular}{cc}
\includegraphics[scale=0.458]{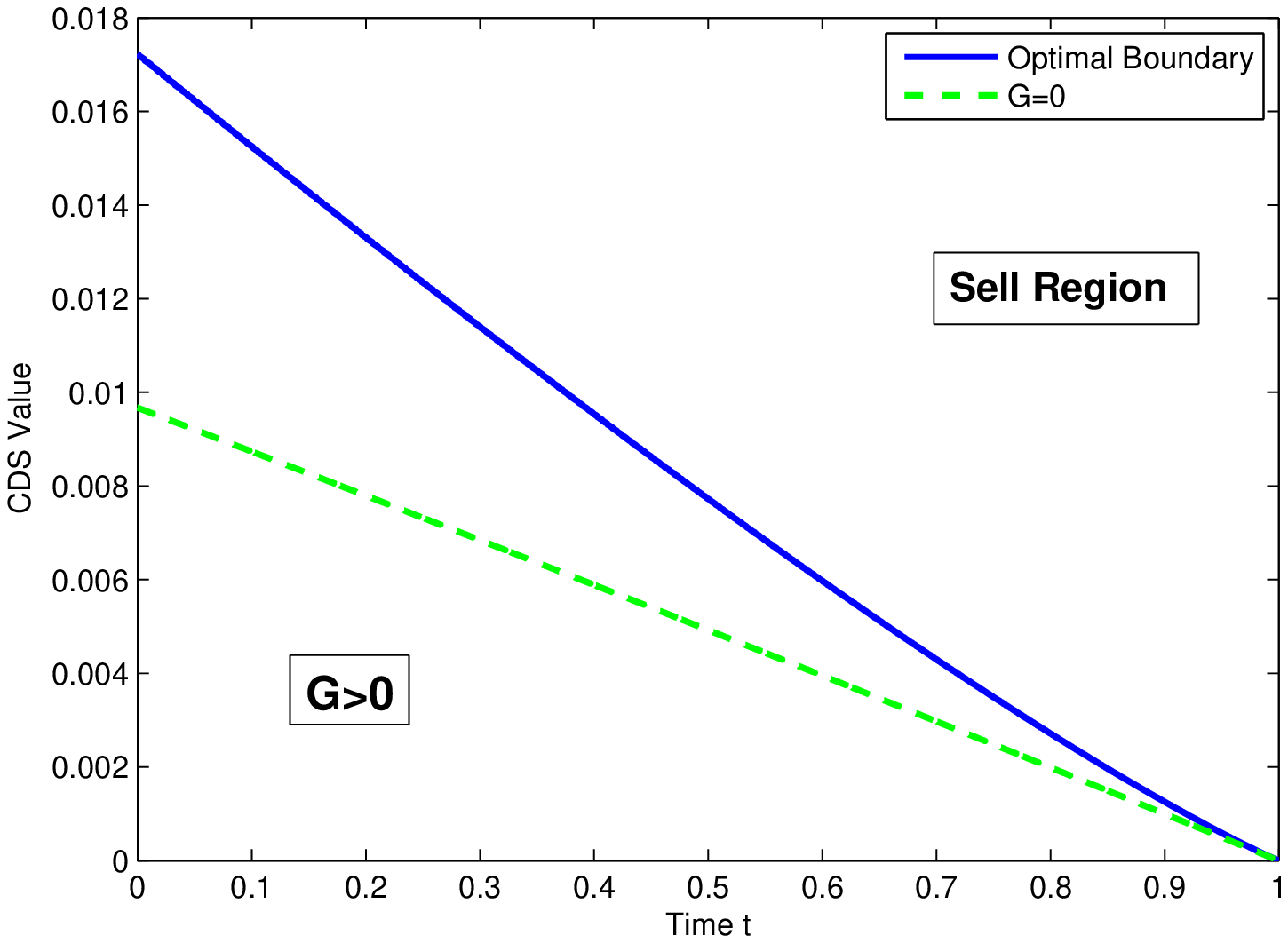} &
\includegraphics[scale=0.458]{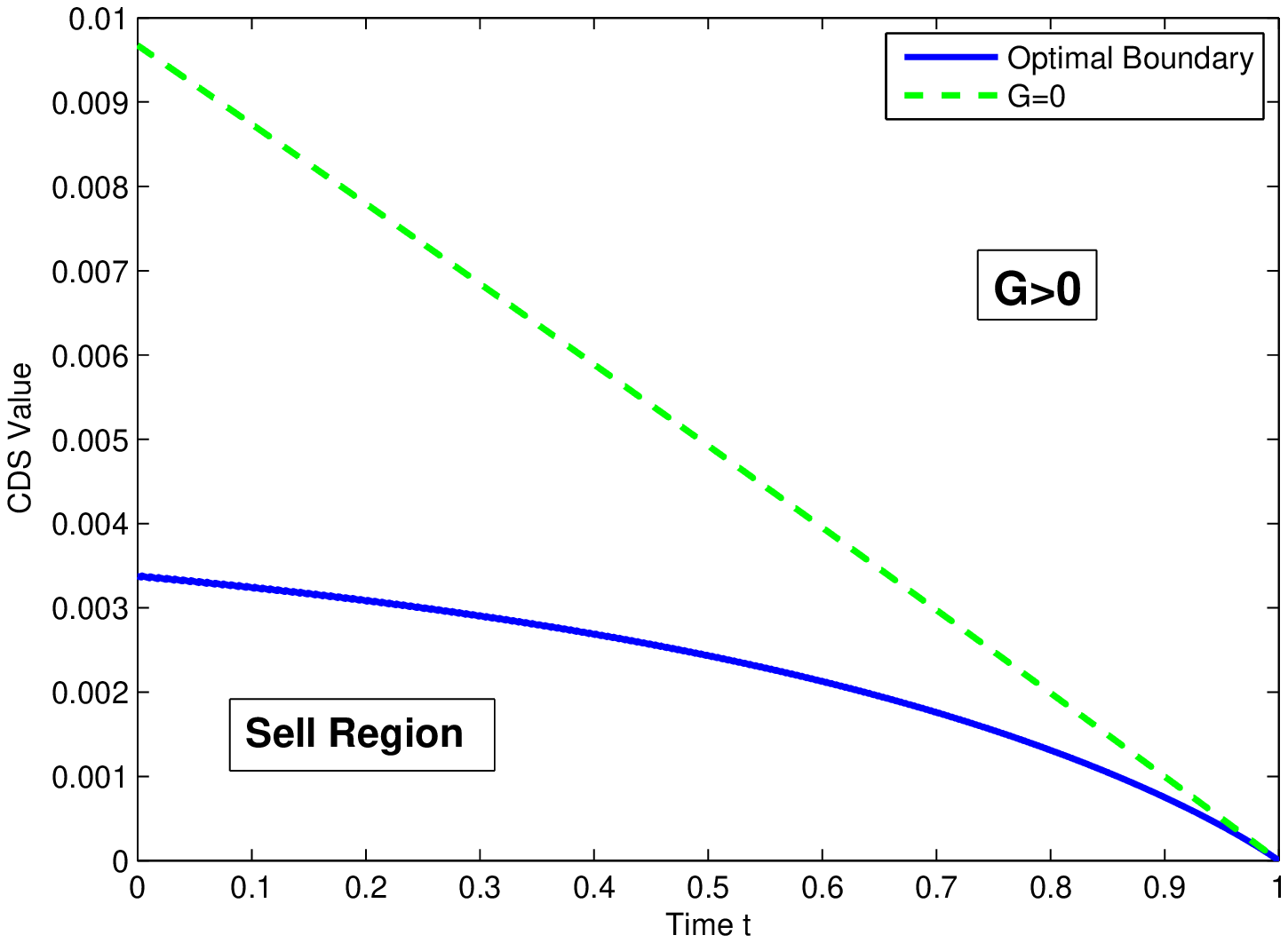} \\
\end{tabular}
\caption{\small{Optimal liquidation boundary in terms of market pre-default CDS value under CIR dynamics. We take $T=1$, $r=0.03$, $\sigma=0.07$, $p^m_0=0.02$, $\mu=\tilde{\mu}=2$, and $\theta_\lambda=\tilde{\theta}_\lambda=0.015$. \emph{Left panel}: When $\kappa_\lambda=0.2<0.3=\tilde{\kappa}_\lambda$, liquidation occurs at an upper boundary that decreases from $0.0172$ to $0$ over $t\in[0,1]$. \emph{Right panel}: When $\kappa_\lambda=0.3>0.2=\tilde{\kappa}_\lambda$, the CDS is liquidated at a lower liquidation boundary, which  decreases from $0.00338$ to $0$ over time. In both cases, the dashed line defined by $G=0$ lies within the continuation region. }}
\label{fig3}
\end{figure}

\begin{remark} As a  straightforward generalization under our framework,  one can replace the unit payment at default by $R^{CDS}(\tau_d, {\bf X_{\tau_d}})$. Then,  we can apply the quadruple $(0, -p^m_0,R^{CDS} , \tau_d)$ to  Theorem \ref{thm_main}, and obtain the same drift function  $G^{CDS}$ in  \eqref{G_CDS} except with $1$ replaced by $R^{CDS}(t, {\bf x})$.
\end{remark}

\subsection{Jump-Diffusion Default Intensity} We can extend our analysis to incorporate jumps to the stochastic state vector. To illustrate this, suppose   the default intensity and interest rate are driven by a $n$-dimensional state vector ${\bf X}'$  with  the affine jump-diffusion dynamics:
\begin{align}
d{\bf X}'_t=a(t,{\bf X}'_t)dt+\Sigma(t,{\bf X}'_t)d{\bf W}^\P_t+ d{\bf J}_t,
\end{align}
where ${\bf J}=(J^1,\ldots,J^n)^T$ is a  vector of $n$ independent pure jump
processes taking values in $\R^n$. Under historical measure $\P$, we assume Markovian jump intensity of the form $\hat{\bf {\Lambda}}(t,{\bf X}'_t)=(\hat{\Lambda}^1(t,{\bf X}'_t),\ldots,\hat{\Lambda}^n(t,{\bf X}'_t))^T$ for ${\bf J}$. All random jump sizes $(Y^i_j)_{ij}$ of ${\bf J}$ are independent, and for each  $J^i$,  the associated jump sizes $Y^i_1, Y^i_2,\ldots$ have  a common probability density
function $\hat{f}^i$.

The default intensity of defaultable security is given by $\hat{\lambda}(t,{\bf X}'_t)$ for some positive measurable function $\hat{\lambda}(\cdot, \cdot)$, and the default counting process associated with default time $\tau_d$ is denoted by $N_t=\1_{\{t\geq\tau_d\}}$. We denote $(\mathcal{G}_t)_{0\le t\le T}$ to be the full filtration generated by ${\bf W}^\P, {\bf J}$, and $\tau_d$.

We define a market pricing measure $\Q$ in terms of the mark-to-market  risk premium ${\boldsymbol \phi}^{\Q, \P}$ and event risk premium $\mu$, which are  Markovian and  satisfy $\int_0^T  |\!| {\boldsymbol \phi}^{ {\Q}, \P}_u |\!|^2 du<\infty$ and  $\int_0^T \mu_u\hat{\lambda}_u du<\infty$. Due to the presence of ${\bf J}$, the market measure $\Q$ can  scale the jump intensity of ${\bf J}$ by the positive Markovian factors  $\delta_t^i = \delta^i(t,{\bf X}'_t)$, with
$\int_0^T \delta^i_u \hat{\Lambda}^i_u du<\infty$ for $i=1,\ldots,n$. Also, $\Q$ can transform the jump size distribution of ${\bf J}$ by a function $(h^i)_{i=1,\ldots,n} > 0$ satisfying $\int_0^\infty h^i(y)\hat{f}^i(y)dy=1$ for $i=1,\ldots,n$.

  The Radon-Nikodym derivative is given by
\begin{align}
\frac{d{\Q}}{d\P}\big|\mathcal{G}_t=\mathcal{E}\big(- {\boldsymbol \phi}^{\Q, \P}\! \cdot\! {\bf W}^\P\big)_t\,\mathcal{E}\big((\mu-1) M^\P\big)_t K_t^{\Q, \P}\nonumber,
\end{align}
where  first two Dol\'eans-Dade exponentials are defined in \eqref{measure_W_PQ} and \eqref{measure_M_P_Q} respectively, $M^\P_t:=N_t-\int_0^t \! (1-N_u)\hat{\lambda}_u du$ is the compensated $\P$-martingale associated with $N$, and the last term
\begin{align}
K_t^{\Q, \P}:=\prod_{i=1}^n \bigg[\exp\bigg(\int_0^t\int_{\R^n}{\big(1-\delta^i(u,{\bf X}'_u)  h^i(y)\big)\hat{\Lambda}^i(u,{\bf X}'_u)\hat{f}^i(y)d{y} du}\bigg)\prod_{j=1}^{N^{(i)}_t}\,\big(\delta^i(T_j^i,{\bf X}'_{T_j^i})h^i(Y^i_j)\big)\bigg]\notag,
\end{align}
where  $T^i_j$ is the $j$th jump time of $J^i$ and $N_t^{(i)}:=\sum_{j\geq1}\1_{\{T^i_j\leq t\}}$ is the counting process associated with $J^i$.

By  Girsanov Theorem, ${\bf W}^{\Q}_t:={\bf W}^\P_t+\int_0^t{\boldsymbol \phi}_u^{\Q,\P}du$ is a $\Q$-Brownian motion, the jump intensity of $J^i$ under $\Q$ is $\Lambda^i(t,{\bf X}'_t):=\delta^i(t,{\bf X}'_t)\hat{\Lambda}^i(t,{\bf X}'_t)$, and the jump size pdf of $J^i$ under $\Q$ is  $f^i(y):=h^i(y)\hat{f}^i(y)$. The $\Q$-dynamics of state vector ${\bf X}'$ is given by
\begin{align}
d{\bf X}'_t=b(t,{\bf X}'_t)dt+\Sigma(t,{\bf X}'_t)d{\bf W}^\Q_t+ d{\bf J}_t.
\end{align} Also incorporating the event risk premium, the  $\Q$-default intensity is  $\lambda(t,{\bf X}'_t):=\mu(t,{\bf X}'_t)\hat{\lambda}(t,{\bf X}'_t)$.

Under the investor's measure $\tilde{\Q}$, we replace $b$ with $\tilde{b}$, ${\boldsymbol \phi}^{\Q,\P}$ with ${\boldsymbol \phi}^{\tilde{\Q},\P}$, and  ${\bf W}^\Q$ with ${\bf
W}^{\tilde{\Q}}$ for the dynamics of ${\bf X}'$. For each $J^i$, the $\tilde{\Q}$-intensity  is denoted by
$\tilde{\Lambda}^i(t,{\bf X}'_t):={\tilde{\delta}}^i(t,{\bf X}'_t)\hat{\Lambda}^i(t,{\bf X}'_t)$, and the jump size pdf under $\tilde{\Q}$ is $\tilde{f}^i(y):=\tilde{h}^i(y)\hat{f}^i(y)$. With investor's event risk premium $\tilde{\mu}$, the default intensity under $\tilde{\Q}$ is $\tilde{\lambda}(t,{\bf X}'_t):=\tilde{\mu}(t,{\bf X}'_t)\hat{\lambda}(t,{\bf X}'_t)$.

The two pricing measures  $\Q$ and $\tilde{\Q}$  are related by the Radon-Nikodym derivative:
\begin{align}
\frac{d\tilde{\Q}}{d\Q}\big|\mathcal{G}_t=\mathcal{E}\big(- {\boldsymbol \phi}^{\tilde{\Q}, \Q}\! \cdot\! {\bf W}^\Q\big)_t\,\mathcal{E}\big((\frac{\tilde{\mu}}{\mu}-1) M^\Q\big)_t K_t^{\tilde{\Q}, \Q}\nonumber,
\end{align}
where $M^\Q_t:=N_t-\int_0^t \! (1-N_u)\lambda_u du$ is the compensated $\Q$-martingale associated with $N$,  the first two Dol\'eans-Dade exponentials are defined in \eqref{measure_W} and \eqref{measure_M}, and
\begin{align}
K_t^{\tilde{\Q}, \Q}:=\prod_{i=1}^n\bigg[\exp\bigg(\int_0^t\int_{\R^n}{\big(\Lambda^i(u,{\bf X}'_u)f^i(y)-\tilde{\Lambda}^i(u,{\bf X}'_u)\tilde{f}^i(y)\big)dy du}\bigg)\prod_{j=1}^{N^{(i)}_t}\,\frac{\tilde{\Lambda}^i(T^i_j,{\bf X}'_{T^i_j}) \tilde{f}^i({\bf Y}^i_j)}{\Lambda^i(T^i_j,{\bf X}'_{T^i_j})f^i({\bf Y}^i_j)}\bigg]\notag.
\end{align}
Consequently, on top of  the mark-to-market risk and event risk premia, the investor can potentially  disagree with  the market over  jump intensity and jump size distribution of ${\bf X}'$, allowing for a richer structure of price discrepancy as well as the  optimal liquidation strategy.

As in Theorem \ref{thm_main}, we compute the drift function in terms of pre-default price $C$ and default risk premia, namely,
\begin{align}
G^J(t,{\bf x})&=-\big(\nabla_{x} C(t,{\bf x})\big)^T\Sigma(t,{\bf x}){\boldsymbol \phi}^{\tilde{\Q},\Q}(t,{\bf x})+(R(t,{\bf x})-C(t,{\bf x}))\big(\tilde{\mu}(t,{\bf x})-\mu(t,{\bf x})\big)\hat{\lambda}(t,{\bf x})\nonumber\\
&+\sum_{i=1}^n \bigg(\int_{\R^n}{\big(C(t,{\bf x}+y{\bf e}_i)-C(t,{\bf x})\big)\big(\tilde{\Lambda}^i(t,{\bf x}) \tilde{f}^i(y)-\Lambda^i(t,{\bf x}) f^i(y)\big)dy}\bigg),\label{G_jump}
\end{align}
where ${\bf e}_i:=(0, \ldots, 1, \ldots,0)^T$. We observe that the first two components of $G^J$ share the same functional form as $G$ in \eqref{G_general}, though the  price function $C$ is derived from the jump-diffusion model. Even if the investor and the market assign the same mark-to-market risk and event risk premia, discrepancy over jump intensity and distribution will yield different liquidation strategies.  Under  quite general affine jump-diffusion models,  Duffie et al. \cite{Duffie2000} provide an analytical treatment of transform analysis, which can be used for the computation of our drift function.

\section{Optimal Liquidation of Credit Default Index Swaps}\label{multi}
We proceed to discuss the optimal liquidation of multi-name credit derivatives. In the literature, there exist many proposed models for modeling multiple  defaults and pricing multi-name credit derivatives. Within the intensity-based framework, one popular approach is to model each default time by the first jump of a doubly-stochastic process. The dependence among defaults can be incorporated via some common stochastic factors. This well-known bottom-up valuation framework has been studied  in \cite{duffiegarleanu, mortensen}, among many others.

As a popular alternative, the top-down approach  describes  directly the dynamics of the cumulative  credit  portfolio loss,  without detailed references to  the constituent single names. Some examples of  top-down models   include  \cite{Arnsdorff2008,Brigo2006,Ding2009,LongstaffRajan2008,Lopatin2008}. In particular,  Errais et al. \cite{Errais2010} proposed affine point processes for portfolio loss with self-exciting property to capture default clustering. For our analysis, rather than  proposing a new multi-name credit risk model, we adopt the self-exciting top-down model from \cite{Errais2010}. Also, we will  focus on the  optimal liquidation of a credit default index swap.

First, we model successive default arrivals by a counting process $(N_t)_{0\le t\le T}$, and  the accumulated portfolio loss   by $\Upsilon_t=l_1+\ldots+l_{N_t}$, with each  $l_n$ representing the random loss at the $n$th default.  Under the  historical measure $\P$, the default  intensity evolves according  to  the jump-diffusion:
\begin{align}\label{SDElamda_multi}
d\hat{\lambda}_t=\hat{\kappa}(\hat{\theta}-\hat{\lambda}_t)dt+\sigma\sqrt{\hat{\lambda}_t}\,dW_t^\P+\eta \,d\Upsilon_t,
\end{align}
where $W^\P$ is a standard $\P$-Brownian motion. We assume that the random losses $(l_n)$ are independent with an identical probability density function $\hat{m}$ on $(0,\infty)$. According to the last term in \eqref{SDElamda_multi}, each  default arrival will increase  default intensity $\hat{\lambda}$ by the loss at default scaled by the positive parameter $\eta$. This term captures default clustering observed in the multi-name credit derivatives as pointed out in \cite{Errais2010}.  We assume a constant risk-free interest rate $r$ for simplicity, and denote $(\mathcal{H}_t)_{0\le t\le T}$ to be the full filtration   generated by $N$, $\Upsilon$, and $W^\P$.

The market measure $\Q$ is characterized by several key components. First, the market's mark-to-market risk premium is assumed to  be  of the form \begin{align}\label{topdownphi}{\phi}_t^{\Q,\P}=\frac{\hat{\kappa}(\hat{\theta}-\hat{\lambda}_t)-\kappa(\theta-\hat{\lambda}_t)}{\sigma \sqrt{\hat{\lambda}_t}}\end{align}such that the default intensity in \eqref{SDElamda_multi} preserves mean-reverting dynamics with different parameters $\kappa$ and $\theta$ under the market measure $\Q$ (see \cite{Azizpour2008,Jarrow2005} for similar specifications). Secondly, we assume that the $\Q$-default intensity is  $\lambda_t:=\mu\hat{\lambda}_t$, with a positive constant event risk premium. Thirdly, the distribution of random losses can be  scaled  under $\Q$. Specifically, we assume that under $\Q$ the losses $(l_n)$ admit  the pdf $m(z):=h(z)\hat{m}(z)$, for some strictly positive function $h$ with $\int_0^\infty h(z)\hat{m}(z)dz=1$ . Then,     the Radon-Nikodym derivative associated with $\Q$ and $\P$ is
\begin{align}
\frac{d\Q}{d\P}\big|\mathcal{H}_t=\mathcal{E}\big(- { \phi}^{\Q, \P} W^\P\big)_t \hat{K}^{\Q, \P}_t\,,
\end{align}
where $\mathcal{E}\big(- {\phi}^{\Q, \P} W^\P\big)$ is defined in \eqref{measure_W_PQ}, and
\begin{align}
\hat{K}_t^{\Q, \P}:=\exp\bigg(\int_0^t\int_0^\infty{\big(1-\mu h(z)\big)\hat{\lambda}_u\hat{m}(z)dz du}\bigg)\prod_{i=1}^{N_t}\,\big(\mu h(l_i)\big).
\end{align}
Under the market pricing measure $\Q$, the $\Q$-default intensity evolves according  to:
\begin{align}\label{SDElamda_multi_Q}
d {\lambda}_t=\kappa(\mu \theta- {\lambda}_t)dt+\sigma\sqrt{ \mu{\lambda}_t}\,dW_t^\Q+\mu\eta\, d\Upsilon_t,
\end{align}
where ${W}^{\Q}_t:={W}^\P_t+\int_0^t{\phi}_u^{\Q,\P}du$ is a standard $\Q$-Brownian motion.  Similarly, we can define the investor's pricing measure $\tilde{\Q}$ through the investor's mark-to-market risk premium ${\phi}^{\tilde{\Q},\P}$ as in \eqref{topdownphi} with   parameters $\tilde{\kappa}$ and $\tilde{\theta}$; default intensity $\tilde{\lambda}_t = \tilde{\mu}\hat{\lambda}_t$ with constant event risk premium $\tilde{\mu}$; and loss scaling  function $\tilde{h}$ so that the loss pdf $\tilde{m}(z) = h(z) \hat{m}(z)$.

The credit default index swap is written on a standardized portfolio of $H$ reference entities, such as single-name default swaps, with same notational normalized to $1$ and same maturity $T$. The investor is a protection buyer who pays at the premium rate $p^m_0$ in return for default payments over $(0,T]$. Here, the default payment is assumed to be paid at the time when default occurs, and the premium payment is paid continuously with premium notational equal to $H-N_t$.

The market's cumulative value of the credit default index swap for the protection buyer is equal to the difference between the market values of the default payment leg and premium leg, namely,
\begin{align}
P^{CDX}_t=\E^\Q\big\{ \int_{(0,T]} e^{-r(u-t)} \ d \Upsilon_u\,|\mathcal{H}_t \big\}- \E^\Q\big\{p^m_0\int_{(0,T]} e^{-r(u-t)} (H-N_u)\,du \,|\mathcal{H}_t \big\}, \quad t\leq T. \label{PCDX}
\end{align}
Hence, similar to  \eqref{V_eq}, the protection buyer solves the following optimal stopping problem:
\begin{align}
V^{CDX}_t=\esssup_{\tau \in \mathcal{T}_{t,T}}~\E^{\tilde{\Q}}\big\{e^{-r(\tau-t)}P^{CDX}_\tau\,|\mathcal{H}_t\big\}\label{V_multi}.
\end{align}
The associated  delayed liquidation premium  is defined by
\begin{align}\label{liq_multi}
L^{CDX}_t=V^{CDX}_t-P^{CDX}_t.\end{align}

The derivation of the optimal liquidation strategy  involves computing the market's ex-dividend value, defined by
\begin{align}
{C}^{CDX}_t=&\E^\Q\big\{ \int_{(t,T]} e^{-r(u-t)} \ d \Upsilon_u\,|\mathcal{H}_t \big\}- \E^\Q\big\{p^m_0\int_{(t,T]} e^{-r(u-t)} (H-N_u)\,du \,|\mathcal{H}_t \big\}.\label{ex_ind}
\end{align}

\begin{proposition} \label{ind_prop}
The market's ex-dividend value of the credit default index swap  in \eqref{ex_ind} can be expressed as ${C}^{CDX}_t = {C}^{CDX}(t,\lambda_t, N_t)$, where
\begin{align}\label{Cind22}
{C}^{CDX}(t,\lambda, n)=k_2(t,T) \lambda+k_1(t,T) n+k_0(t,T),
\end{align}
for $t\le T$, with  coefficients
\begin{align}
&k_2(t,T)=(cr+p^m_0)\bigg(\frac{e^{-(\rho+r)(T-t)}}{\rho(\rho+r)}-\frac{e^{-r(T-t)}}{\rho r} +\frac{1}{r(\rho+r)}\bigg)+\frac{c e^{-r(T-t)}}{\rho }\big(1-e^{-\rho(T-t)}\big), \\
&k_1(t,T)=\frac{p^m_0\big(1-e^{-r(T-t)}\big)}{r}, \\
& k_0(t,T)=\frac{\kappa\mu\theta}{\rho}\bigg((rc+p^m_0)\big[e^{-r(T-t)}\big( -\frac{e^{-\rho(T-t)}}{\rho(\rho+r)}-\frac{T-t}{r}-\frac{1}{r^2}+\frac{1}{r\rho}\big)+\frac{\rho}{r^2(r+\rho)}\big]\notag \\
&~~~~~~~~~\,+ce^{-r(T-t)}\big(\frac{e^{-\rho(T-t)}-1}{\rho}+T-t\big)\bigg)-\frac{p^m_0 H}{r}\big(1-e^{-r(T-t)}\big),
\end{align}
and constants \begin{align}\label{constants1}c=\int_0^{\infty}z m(z)dz, \qquad  \text{and }\qquad  \rho=\kappa-\mu\eta c.\end{align}
\end{proposition}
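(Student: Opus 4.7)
The plan is to reduce the computation of $C^{CDX}(t,\lambda,n)$ to finding closed-form expressions for $\E^\Q[\lambda_u\mid \mathcal{H}_t]$ and $\E^\Q[N_u\mid \mathcal{H}_t]$, and then plugging these into the two integrals in \eqref{ex_ind} via Fubini. The key preliminary observation is that under $\Q$, the counting process $N$ has intensity $\lambda$, so $N_t - \int_0^t \lambda_s\,ds$ is a $\Q$-local martingale, and because the losses $(l_n)$ are i.i.d.\ with $\Q$-density $m$ and independent of the default times, the compensator of $\Upsilon$ under $\Q$ is $\int_0^\cdot c\lambda_s\,ds$, where $c=\int_0^\infty z\,m(z)\,dz$. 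Equivalently, $M^\Upsilon_t := \Upsilon_t - \int_0^t c\lambda_s\,ds$ is a $\Q$-local martingale (true martingale after checking standard moment bounds for the affine jump-diffusion $\lambda$).

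Next I would take the conditional expectation of the $\Q$-SDE \eqref{SDElamda_multi_Q} on $[t,u]$. The Brownian term integrates to a martingale, and the jump contribution $\mu\eta\,d\Upsilon_u$ contributes $\mu\eta c\,\E^\Q[\lambda_u\mid \mathcal{H}_t]\,du$ after absorbing the martingale part $\mu\eta\,dM^\Upsilon_u$. Writing $f(u):=\E^\Q[\lambda_u\mid \mathcal{H}_t]$, this yields the linear ODE
\[f'(u) = \kappa\mu\theta - \rho f(u), \qquad f(t)=\lambda,\]
with $\rho=\kappa-\mu\eta c$. Solving gives
\[f(u) = \lambda e^{-\rho(u-t)} + \frac{\kappa\mu\theta}{\rho}\bigl(1-e^{-\rho(u-t)}\bigr),\]
and by Fubini together with the compensator of $N$ one immediately obtains
\[\E^\Q[N_u\mid \mathcal{H}_t] = n + \int_t^u f(s)\,ds.\]

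Finally, I would substitute into the two integrals making up \eqref{ex_ind}. For the protection leg, Fubini and the compensator of $\Upsilon$ give $\E^\Q\bigl[\int_t^T e^{-r(u-t)}d\Upsilon_u\mid \mathcal{H}_t\bigr] = c\int_t^T e^{-r(u-t)} f(u)\,du$. For the premium leg, substituting $\E^\Q[N_u\mid \mathcal{H}_t]$ and swapping the order of integration in the resulting double integral reduces everything to elementary integrals of $e^{-r(u-t)}$, $e^{-(r+\rho)(u-t)}$, and $(u-t)e^{-r(u-t)}$ over $[t,T]$. Grouping the resulting terms by their dependence on $\lambda$, on $n$, and on constants produces precisely the affine form \eqref{Cind22} with coefficients $k_2, k_1, k_0$.

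The main obstacle is purely a bookkeeping one: the coefficient of $\lambda$ in $k_2$ collects contributions from both legs and must be sorted into the three exponential terms appearing in the stated formula; one verifies, for instance, that the constant part aggregates to $(cr+p^m_0)/[r(r+\rho)]$, the $e^{-r(T-t)}$ part reduces (after the $c/\rho$ contribution from the $\Upsilon$ integral cancels the matching piece of $(cr+p^m_0)/(-\rho r)$) to $-p^m_0/(\rho r)$, and the $e^{-(r+\rho)(T-t)}$ part to $(p^m_0-c\rho)/[\rho(r+\rho)]$, matching the displayed expression. The analogous recollection delivers $k_0$ from the double integral $\int_t^T e^{-r(u-t)}\int_t^u f(s)\,ds\,du$ and the constants produced by the terms not depending on $n$ or $\lambda$, while $k_1$ comes directly from $-p^m_0\int_t^T e^{-r(u-t)}(-n)\,du$ plus the $H$ contribution that is absorbed into $k_0$.
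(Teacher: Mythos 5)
Your proposal is correct and arrives at the result by the same underlying strategy as the paper -- reduce everything to the conditional first moments of the affine pair $(\lambda,N)$ and then evaluate elementary exponential integrals -- but the execution differs in two genuine ways. First, the paper handles the protection leg $\E^\Q\{\int_{(t,T]}e^{-r(u-t)}d\Upsilon_u\mid\mathcal{H}_t\}$ by integration by parts, which produces a boundary term $e^{-r(T-t)}\E^\Q\{\Upsilon_T\mid\mathcal{H}_t\}-\Upsilon_t$ and requires the conditional moment of $\Upsilon_u$ itself (one then checks that the $\Upsilon_t$ contributions cancel); you instead pass directly through the $\Q$-compensator $c\int\lambda_s\,ds$ of $\Upsilon$ and Fubini, so that only $\E^\Q[\lambda_u\mid\mathcal{H}_t]$ is ever needed and $\Upsilon_t$ never enters. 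Second, the paper simply cites the closed-form moment formulas $\E^\Q\{N_u\mid\cdot\}=\mathcal{A}(t,u)+\mathcal{B}(t,u)\lambda+n$ from Errais et al., whereas you derive them from scratch via the linear ODE $f'(u)=\kappa\mu\theta-\rho f(u)$ obtained by taking expectations in the $\Q$-SDE for $\lambda$ and absorbing the jump term through the compensator; your $f(u)=\lambda e^{-\rho(u-t)}+\frac{\kappa\mu\theta}{\rho}(1-e^{-\rho(u-t)})$ integrates to exactly the cited $\mathcal{A},\mathcal{B}$. Your approach is therefore self-contained and slightly leaner in its bookkeeping; the paper's buys a shorter write-up by leaning on the external reference. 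The coefficient checks you report for $k_2$ (constant part $(cr+p^m_0)/[r(r+\rho)]$, $e^{-r(T-t)}$ part $-p^m_0/(\rho r)$, $e^{-(r+\rho)(T-t)}$ part $(p^m_0-c\rho)/[\rho(r+\rho)]$) agree with the displayed formula, and the attribution of the $n$- and $H$-terms to $k_1$ and $k_0$ is right.
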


\begin{proof}
Using integration by parts, we re-write  the market's ex-dividend value as
\begin{align}
{C}^{CDX}_t=e^{-r(T-t)} \ \E^\Q\{\Upsilon_T|\mathcal{H}_t \}-\Upsilon_t+\int_t^T {e^{-r(u-t)} \ \big[r\E^\Q\{\Upsilon_u |\mathcal{H}_t\}-p^m_0\big(H-\E^\Q\{N_u |\mathcal{H}_t\}\big)\big]\,du}.\label{integration}
\end{align}
Hence, the computation of ${C}^{CDX}$ involves calculating $\E^\Q\{N_u\,|\,\mathcal{H}_t \}$ and $\E^\Q\{\Upsilon_u\,|\,\mathcal{H}_t \}$, $u\geq t$. Since default intensity $\lambda$ follows a square-root jump-diffusion dynamics, these conditional expectation admit the closed-form formulas (see e.g. Section $4.3$ of \cite{Errais2010}):
\begin{align}
\E^\Q\{N_u\,|\,\lambda_t&=\lambda, N_t=n, \Upsilon_t=\upsilon\}=\mathcal{A}(t,u)+\mathcal{B}(t,u)\lambda+n,  \label{cond_N}\\
\E^\Q\{\Upsilon_u\,|\,\lambda_t&=\lambda, N_t=n, \Upsilon_t=\upsilon\}=c\mathcal{A}(t,u)+c\mathcal{B}(t,u)\lambda+\upsilon,  \label{cond_L}
\end{align} for $ t \le u\le T$, where \begin{align}
\mathcal{A}(t,u)&=\frac{\kappa\mu\theta}{\kappa-\mu\eta c} \big(\frac{e^{-(\kappa-\mu\eta c) (u-t)}-1}{\kappa-\mu\eta c}+u-t\big),\label{A_moment}\\
\mathcal{B}(t,u)&=\frac{1}{\kappa-\mu\eta c}(1-e^{-(\kappa-\mu\eta c) (u-t)}).\label{B_moment}
\end{align}
Here, $c$ is the market's expected loss at default given in \eqref{constants1}.  Substituting \eqref{cond_N} and \eqref{cond_L} into \eqref{integration}, we obtain the closed-form formula for market's ex-dividend value in \eqref{Cind22}.
\end{proof}

As a result, the ex-dividend value ${C}^{CDX}$ is linear in the default intensity $\lambda_t$ and number of defaults $N_t$.
Next, we characterize the optimal corresponding liquidation premium and   strategy.

\begin{theorem} \label{thm_multi} Under the top-down credit risk model in \eqref{SDElamda_multi}, the delayed liquidation premium associated with the credit default index swap is given by\begin{align}
L^{CDX}(t,\lambda)=\sup_{\tau \in \mathcal{T}_{t,T}}~\E^{\tilde{\Q}}\big\{\int_t^{\tau}  e^{-r(u-t)}G^{CDX}(u,\lambda_u) du \,|\,\lambda_t=\lambda \big\},\label{L_multi}
\end{align}
where
\begin{align}
G^{CDX}(t,\lambda)&=\bigg(\big(\mu\eta k_2(t,T)+1\big)(\frac{\tilde{\mu}\tilde{c}}{\mu}-c) +k_1(t,T)(\frac{\tilde{\mu}}{\mu}-1)-k_2(t,T)(\tilde{\kappa}-\kappa)\bigg)\lambda\notag\\
&~~+k_2(t,T)\mu(\tilde{\kappa}\tilde{\theta}-\kappa\theta),\label{G_multi}
\end{align}
with $\tilde{c}:=\int_0^{\infty}z \tilde{m}(z)dz$.
If $G^{CDX}(t,\lambda)\geq0$ $\forall (t,\lambda)$, then it is optimal to delay the liquidation till  maturity $T$. If $G^{CDX}(t,\lambda)\leq0$ $\forall (t,\lambda)$, then  it is optimal to sell immediately.
\end{theorem}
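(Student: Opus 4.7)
My plan mirrors the proof of Theorem \ref{thm_main}: derive the $\tilde{\Q}$-semimartingale decomposition of the discounted cumulative market price $e^{-rt}P^{CDX}_t$, identify its drift as $e^{-rt}G^{CDX}(t,\lambda_t)\,dt$, and apply standard optimal stopping theory to conclude.  Writing
\[
P^{CDX}_t = \int_{(0,t]} e^{r(t-u)}\,dD_u + C^{CDX}(t,\lambda_t,N_t), \qquad dD_u = d\Upsilon_u - p^m_0(H-N_u)\,du,
\]
I would exploit the explicit affine form $C^{CDX}(t,\lambda,n) = k_2(t,T)\lambda + k_1(t,T)n + k_0(t,T)$ established in Proposition \ref{ind_prop}, which reduces Itô's formula for jump-diffusions to an essentially linear computation.

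Second, I would apply the Itô formula for jump-diffusions to $e^{-rt}C^{CDX}(t,\lambda_t,N_t)$ under $\tilde{\Q}$.  Since $C^{CDX}$ is affine in $(\lambda,n)$, and since each default with loss $l$ produces simultaneous jumps $\Delta\lambda = \mu\eta l$, $\Delta N = 1$, $\Delta\Upsilon = l$, the jump increment of $e^{-rt}P^{CDX}$ reduces to $e^{-rt}[(\mu\eta k_2(t,T)+1)l + k_1(t,T)]$.  Under $\tilde{\Q}$ this jump stream has intensity $\tilde\lambda_t = (\tilde\mu/\mu)\lambda_t$ and loss density $\tilde m$ with mean $\tilde c$, while the continuous part of $\lambda$ has drift $\tilde\kappa(\mu\tilde\theta-\lambda_t)$ (which follows from applying the Girsanov change with the specification \eqref{topdownphi} and substituting $\hat\lambda_t = \lambda_t/\mu$).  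Using that $e^{-rt}P^{CDX}_t$ is a true $\Q$-martingale to cancel the $\Q$-drift, the residual $\tilde{\Q}$-drift comes from the difference in the continuous drifts of $\lambda$ and the difference in the jump compensators, giving
\begin{align*}
G^{CDX}(t,\lambda_t) &= k_2(t,T)\bigl[\tilde\kappa(\mu\tilde\theta-\lambda_t) - \kappa(\mu\theta-\lambda_t)\bigr] \\
&\quad + \lambda_t\Bigl\{(\mu\eta k_2(t,T)+1)\bigl(\tfrac{\tilde\mu\tilde c}{\mu}-c\bigr) + k_1(t,T)\bigl(\tfrac{\tilde\mu}{\mu}-1\bigr)\Bigr\},
\end{align*}
which rearranges to \eqref{G_multi}.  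Note that the $N_t$-dependent piece $k_1(t,T)N_t$ in $C^{CDX}$ cancels against its own jump compensator $k_1(t,T)\lambda_t\,dt$ under $\Q$, so the residual drift depends only on $\lambda_t$.

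Third, since the continuous Brownian integral and the compensated-jump integrals are true $\tilde{\Q}$-martingales under the standing integrability assumptions, taking conditional expectation of the identity $e^{-r(\tau-t)}P^{CDX}_\tau - P^{CDX}_t = \int_t^\tau e^{-r(u-t)}G^{CDX}(u,\lambda_u)\,du + (\text{martingale})$ and then the essential supremum over $\tau\in\mathcal{T}_{t,T}$ yields
\[
L^{CDX}_t = \esssup_{\tau \in \mathcal{T}_{t,T}}\E^{\tilde{\Q}}\Bigl\{\int_t^\tau e^{-r(u-t)}G^{CDX}(u,\lambda_u)\,du\,\Big|\,\mathcal{H}_t\Bigr\},
\]
and by the Markov property of $\lambda$ under $\tilde{\Q}$ this reduces to $L^{CDX}(t,\lambda_t)$ as in \eqref{L_multi}.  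The sign dichotomy then follows immediately: if $G^{CDX}\ge 0$ everywhere, the maximal choice $\tau=T$ is optimal; if $G^{CDX}\le 0$ everywhere, $\tau=t$ is optimal and $L^{CDX}_t=0$.  The main obstacle is the careful bookkeeping that cleanly separates the drift contributions from the mark-to-market risk premium \eqref{topdownphi} (producing the $k_2(\tilde\kappa-\kappa)\lambda$ and $k_2\mu(\tilde\kappa\tilde\theta-\kappa\theta)$ pieces) from those of the jump-compensator discrepancy (producing the $(\mu\eta k_2+1)(\tilde\mu\tilde c/\mu - c)$ and $k_1(\tilde\mu/\mu-1)$ pieces); once these are correctly disentangled the formula assembles exactly as claimed.
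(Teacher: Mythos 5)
Your proposal is correct and follows essentially the same route as the paper's proof: both decompose the cumulative price into the ex-dividend value plus the dividend integral, use the $\Q$-martingale property of $e^{-rt}P^{CDX}_t$ to avoid computing the full $\Q$-drift, identify the residual $\tilde{\Q}$-drift from the Girsanov shift in the Brownian part together with the change of jump intensity and loss density, and exploit the affine form of $C^{CDX}$ from Proposition \ref{ind_prop} to see that $G^{CDX}$ is independent of $n$. Your bookkeeping of the jump increment $(\mu\eta k_2+1)l+k_1$ and of the compensator difference reproduces \eqref{G_multi} exactly as in the paper.
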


\begin{proof}  In view of the definition of $L^{CDX}$ in \eqref{liq_multi}, we consider the dynamics of $P^{CDX}$. First, it follows from \eqref{PCDX} and \eqref{ex_ind} that
\begin{align}
e^{-r(u-t)}P^{CDX}_u=e^{-r(u-t)}{C}^{CDX}_u+\int_{(0,u]} e^{-r(v-t)} \ \big( d \Upsilon_v-p^m_0(H-N_v)dv\big).\label{diff_ex_cum}
\end{align}
Using \eqref{diff_ex_cum} and the fact that $e^{-rt}P^{CDX}_t$ is $\Q$-martingale (whose SDE must have no drift), we apply Ito's lemma to get
\begin{align}
&e^{-r(\tau-t)}P^{CDX}_\tau-P^{CDX}_t\notag\\
=&\int_t^\tau e^{-r(u-t)}{\frac{\partial {C}^{CDX}}{\partial \lambda}}(u,\lambda_u, N_u) \sigma\sqrt{\mu\lambda_u} dW^\Q_u\nonumber\\
&+\bigg[\sum_{t<u\leq\tau} e^{-r(u-t)}(\Upsilon_u-\Upsilon_{u_-})-\int_t^\tau\int_0^\infty{e^{-r(u-t)}z m(z)\lambda_u\,dz du}\bigg]\notag\\
&+\bigg[\sum_{t<u\leq\tau} e^{-r(u-t)}\big({C}^{CDX}(u,\lambda_u,N_u)-{C}^{CDX}(u,\lambda_{u-},N_{u-})\big)\notag\\
&-\int_t^\tau\int_0^\infty{e^{-r(u-t)}\big({C}^{CDX}(u,\lambda_{u}+\mu\eta z,N_{u}+1)-{C}^{CDX}(u,\lambda_{u},N_{u})\big) m(z)\lambda_u\,dz du}\bigg]\label{eqn11}\\
=&\int_t^\tau e^{-r(u-t)}\big({\frac{\partial {C}^{CDX}}{\partial \lambda}}(u,\lambda_u, N_u) \sigma\sqrt{\mu\lambda_u} dW^{\tilde{\Q}}_u+G^{CDX}(u,\lambda_u, N_u) du\big)\nonumber\\
&+\bigg[\sum_{t<u\leq\tau} e^{-r(u-t)}(\Upsilon_u-\Upsilon_{u_-})-\int_t^\tau\int_0^\infty{e^{-r(u-t)}z \tilde{m}(z)\tilde{\lambda}_u\,dz du}\bigg]\notag\\
&+\bigg[\sum_{t<u\leq\tau} e^{-r(u-t)}\big({C}^{CDX}(u,\lambda_u,N_u)-{C}^{CDX}(u,\lambda_{u-},N_{u-})\big)\notag\\
&-\int_t^\tau\int_0^\infty{e^{-r(u-t)}\big({C}^{CDX}(u,\lambda_{u}+\mu\eta z,N_{u}+1)-{C}^{CDX}(u,\lambda_{u},N_{u})\big) \tilde{m}(z)\tilde{\lambda}_u\,dz du}\bigg],\label{int_measure_change}
\end{align} for $t \leq \tau\leq T$,  where
\begin{align}
G^{CDX}(t,\lambda,n)&:=\frac{\partial {C}^{CDX}}{\partial \lambda}(t,\lambda, n)\big((\tilde{\kappa}\tilde{\theta}-\kappa\theta)\mu-(\tilde{\kappa}-\kappa)\lambda\big)\label{G_n}
\\
&+\int_0^\infty{\big(z+{C}^{CDX}(t,\lambda+\mu\eta z,n+1)-{C}^{CDX}(t,\lambda,n)\big)(\frac{\tilde{\mu}}{\mu}\tilde{m}(z)-m(z)\big)\lambda dz \, }.\notag\end{align}
Note  that the two  compensated $\Q$-martingale terms in  \eqref{eqn11}  account for, respectively, losses and changes in $C^{CDX}$ value due to default arrivals. The second equation \eqref{int_measure_change} follows from change of measure from $\Q$ to $\tilde{\Q}$.

By Proposition \ref{ind_prop},   the terms $\frac{\partial {C}^{CDX}}{\partial \lambda}$ and ${C}^{CDX}(t,\lambda+\mu\eta z,n+1)-{C}^{CDX}(t,\lambda,n)$ do \emph{not} depend on $n$. Consequently, $G^{CDX}$ does not depend on $n$, and admits the closed-form formula \eqref{G_multi}  upon a substitution of  \eqref{Cind22} into \eqref{G_n}.

By taking the expectation on both sides of \eqref{int_measure_change} under $\tilde{\Q}$, the delayed liquidation premium $L^{CDX}$ satisfies \eqref{L_multi} and depends only on $t$ and $\lambda$. If $G^{CDX}\ge 0$, then the integrand in \eqref{L_multi} is positive a.s. and therefore the largest possible stopping time $T$ is optimal. If $G^{CDX} \le 0$, then $\tau^*=t$ is optimal and $L^{CDX}_t=0$ a.s.
\end{proof}

 We observe that the drift function consists of two components. The first component in \eqref{G_n} accounts for the disagreement between investor and market on the fluctuation of market ex-dividend value, while the second integral term reflects the disagreement on the jumps of market's cumulative value arising from the   losses at default and the jumps in the  ex-dividend value.  Even though the market's cumulative value  $P^{CDX}$ in \eqref{PCDX} and the optimal expected liquidation value $V^{CDX}$ in \eqref{V_multi} are path-dependent, both the delayed liquidation premium $L^{CDX}$ in \eqref{L_multi}  and $G^{CDX}$ in \eqref{G_multi} depend only on  $t$ and $\lambda$ due to the special structure of ${C}^{CDX}$ given in \eqref{Cind22} .

To obtain the variational inequality of $L^{CDX}$, we recall that $\tilde{\lambda} = {\tilde{\mu}\lambda}/{\mu}$ and    the $\tilde{\Q}$-dynamics of default intensity $\lambda$:
\begin{align}
d\lambda_t=\tilde{\kappa}(\mu\tilde{\theta}-\lambda_t)dt+\sigma\sqrt{\mu\lambda_t}\,dW_t^{\tilde{\Q}}+\mu\eta \,d\Upsilon_t.\notag
\end{align}
The delayed liquidation premium $L^{CDX}(t,\lambda)$ as a function of time $t$ and $\Q$-default intensity $\lambda$ satisfies the  variational inequality
\begin{align}
\text{min}\bigg(&-\frac{\partial L^{CDX}}{\partial t}-\tilde{\kappa}(\mu\tilde{\theta}-\lambda)\frac{\partial L^{CDX}}{\partial \lambda}-\frac{\sigma^2\mu\lambda}{2} \frac{\partial L^{CDX}}{\partial
\lambda^2}+rL^{CDX}\notag\\
&-\frac{\tilde{\mu}\lambda}{\mu}\int_0^\infty \big(L^{CDX}(t,\lambda+\mu\eta z)-L^{CDX}(t,\lambda)\big) \tilde{m}(z)dz -G^{CDX}, \ L^{CDX}\bigg)=0, \label{L_index}
\end{align}
for  $(t,\lambda) \in [0,T) \times \mathbb{R}$, with   terminal condition $L(T,\lambda)=0$ for $ \lambda \in \mathbb{R}$.

We consider a numerical example for an index swap with constant losses at default. In this case, the integral term in \eqref{L_index} reduces to $L^{CDX}(t,\lambda+\mu\eta c)-L^{CDX}(t,\lambda)$, where $c$ is the constant loss. We employ the standard implicit PSOR iterative algorithm to solve $L^{CDX}$ by finite difference method with Neumann condition applied on the intensity boundary. There exist many   alternative numerical methods to solve variational inequality with an integral term (see, among others,  \cite{Andersen2000,Halluin2003}). We apply a second-order Taylor approximation to the difference $L^{CDX}(t,\lambda+\mu\eta c)-L^{CDX}(t,\lambda)\approx \partial_\lambda L^{CDX}(t,\lambda) \mu\eta c +\frac{1}{2}\partial_{\lambda\lambda}L^{CDX}(t,\lambda) (\mu\eta c)^2$.  In turn, these new partial derivatives are incorporated in the existing partial derivatives in \eqref{L_index}, rendering the variational inequality completely linear in $\lambda$, and thus, allowing for rapid computation.

We denote the investor's sell region $\mathcal{S}$ and delay region $\mathcal{D}$  by\begin{align}
\mathcal{S}^{CDX}&=\{(t,{\lambda})\in [0,T]\times \R : \ {L}^{CDX}(t,{\lambda})=0\}, \label{S_regioncdx}\\
\mathcal{D}^{CDX}&=\{(t,{\lambda})\in [0,T]\times \R : \ {L}^{CDX}(t,{\lambda})>0\}.\label{D_regioncdx}
\end{align}
On the other hand, we observe from \eqref{Cind22}  a one-to-one correspondence between the market's ex-dividend value ${C}^{CDX}$ of an index swap and its default intensity $\lambda$ for any fixed $t<T$, namely,
\begin{align}
\lambda=\frac{{C}^{CDX}-k_1(t,T) n-k_0(t,T)}{k_2(t,T)}.\label{ind_inverse}
\end{align}
Substituting (\ref{ind_inverse}) into (\ref{S_regioncdx}) and (\ref{D_regioncdx}), we can describe the sell region and delay region in terms of the observable market ex-dividend value ${C}^{CDX}$.

In Figure \ref{fig4}, we assume that the investor agrees with the market on all parameters except the speed of mean reversion for default intensity. In the case with  $\kappa=0.5<1=\tilde{\kappa}$ (Figure \ref{fig4} (left)), the investor's optimal liquidation strategy is to sell as soon as the market ex-dividend value of index swap $C^{CDX}$ reaches an upper boundary. In the case with   $\kappa=1>0.5=\tilde{\kappa}$ (Figure \ref{fig4} (right)), the sell region  is below the continuation region.

\begin{figure}[ht]
\centering
\begin{tabular}{cc}
\includegraphics[scale=0.52]{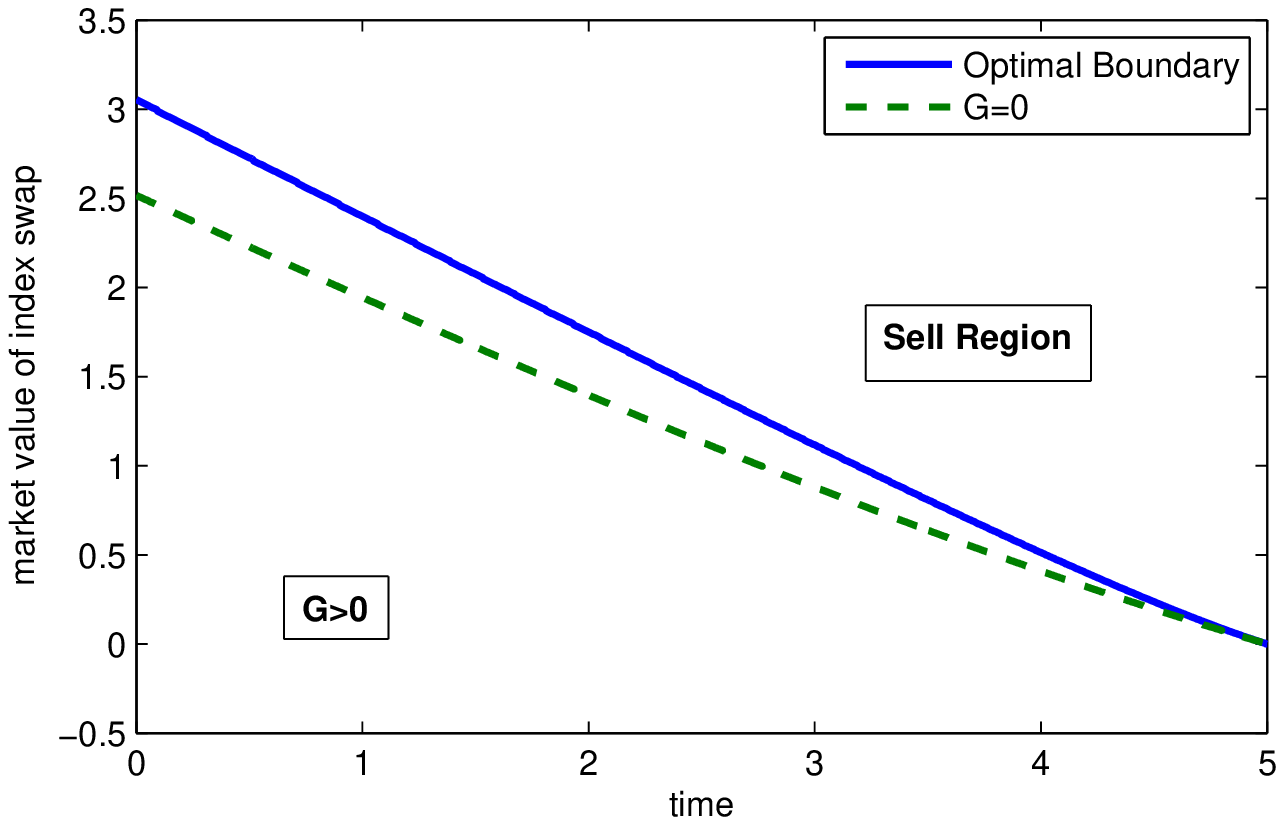} &
\includegraphics[scale=0.52]{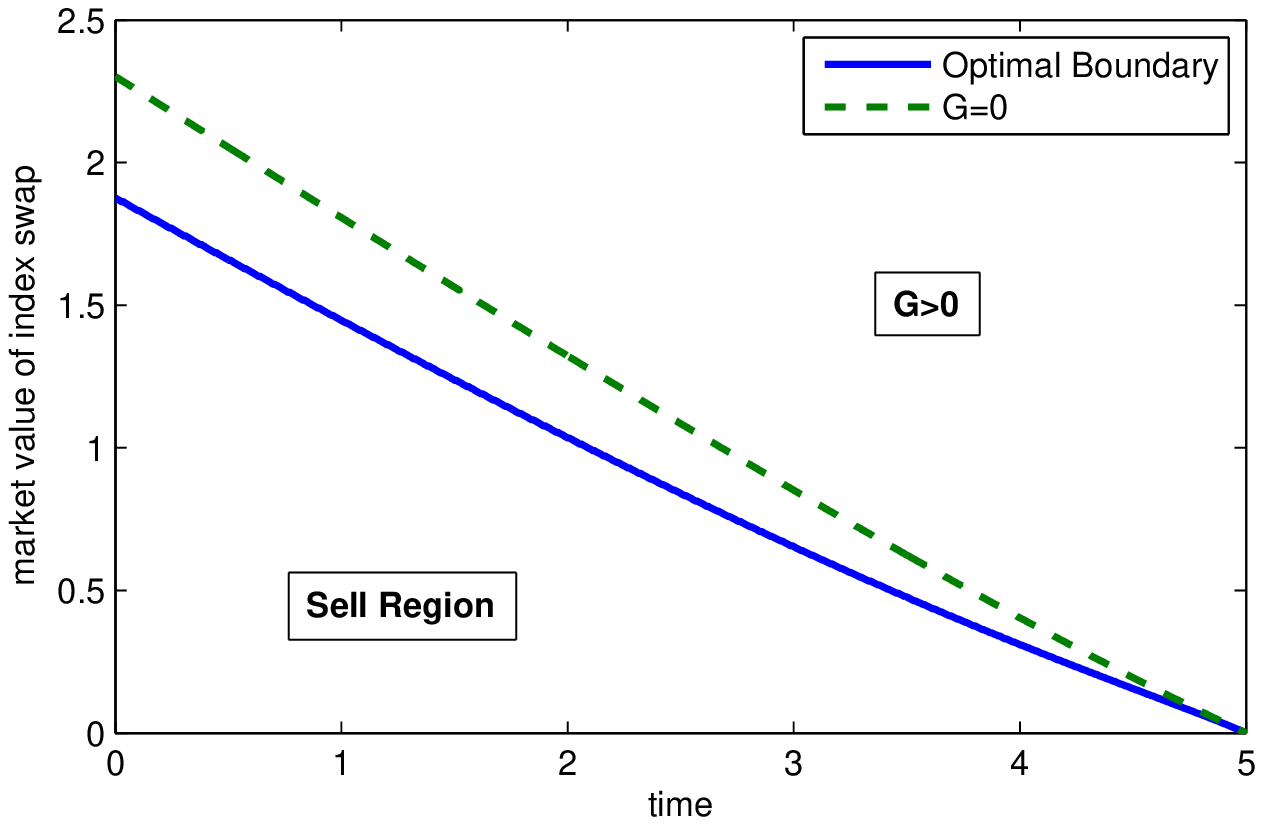} \\
\end{tabular}
\caption{\small{Optimal liquidation boundary in terms of market ex-dividend value of an index swap. We take $T=5$, $r=0.03$, $H=10$, $\eta= 0.25$, $\sigma=0.5$, $c=\tilde{c}=0.5$, $p^m_0=0.02$, $\theta=\tilde{\theta}=1$, and $\mu=\tilde{\mu}=1.1$.  \emph{Left panel}: When $\kappa=0.5<1=\tilde{\kappa}$, liquidation occurs at an upper boundary that decreases from $3$ to $0$ over $t\in[0,5]$. \emph{Right panel}: When $\kappa=1>0.5=\tilde{\kappa}$, the index swap is liquidated at a lower liquidation boundary, which  decreases from $1.9$ to $0$ over time. In both cases, the dashed line defined by $G^{CDX}=0$ lies within the continuation region. }}
\label{fig4}
\end{figure}

In summary, we have analyzed the optimal liquidation of a credit default index swap under a top-down credit risk model. The  selected model and contract specification give us tractable  analytical results that are amenable for numerical computation.  The top-down model implies  that the underlying credit portfolio can experience countably many defaults. As argued by Errais et al. \cite{Errais2010}, this feature is innocuous for a large diversified portfolio in practice since the likelihood of total default is negligible.

Our analysis here can be extended to the liquidation of  CDOs. Consider  a tranche with lower and higher attachment points ${K_1}, {K_2} \in[0,1]$ of  a CDO with $H$ names and  unit notionals. The tranche loss is a function of the accumulated loss $\Upsilon_t$, given by $\tilde{L}_t= (\Upsilon_t-{K_1}H)^+-(\Upsilon_t-{K_2}H)^+$, $t \in [0,T]$. With  premium rate $p_0^m$, the ex-dividend  market price of the CDO tranche for the protection buyer is
\begin{align}
\mathcal{C}^{CDO}(t, \lambda_t,\Upsilon_t)=\E^\Q\big\{ \int_{(t,T]} e^{-r(u-t)} \ d \tilde{L}_u\,|\mathcal{H}_t \big\}- \E^\Q\big\{p^m_0\int_{(t,T]} e^{-r(u-t)} (H({K_2}-{K_1})-\tilde{L}_u)\,du \,|\mathcal{H}_t \big\}\notag.
\end{align} Hence, the  CDO price is a function  of the accumulated loss $\Upsilon$, as opposed to  $N$   in the case of CDX (see Proposition \ref{ind_prop} above).

\section{Optimal Buying and Selling}Next, we adapt our model to study the  optimal buying and
selling problem. Consider an investor whose objective is to maximize the
revenue through a buy/sell transaction of a   defaultable claim $(Y, A, R,
\tau_d)$ with   market price process $P$   in (\ref{P_price}). The problem is
studied separately under two scenarios, namely, when  the short sale of the defaultable claim is permitted or prohibited. We shall analyze these problems under  the Markovian credit risk model in Section \ref{sect-credit derivative}.

If the investor seeks  to purchase a defaultable claim from the market, the optimal purchase timing problem and the associated \emph{delayed purchase premium} can be defined as:
\begin{align}
V_t^{b}=\essinf_{\tau^b \in \mathcal{T}_{t,T}}~\E^{\tilde{\Q}}\big\{e^{-\int_t^{\tau^b}  r_v  dv}P_{\tau^b}|\G_t\big\}, \quad \text{ and } \quad  L_t^{b}:=P_t-V_t^{b}\ge 0.\label{def_L_buy}
\end{align}

\subsection{Optimal Timing with Short Sale Possibility}
When short sale is permitted, there is no restriction on the
ordering of purchase time $\tau^b$ and  sale time $\tau^s$.  The  investor's investment timing is found from the optimal double-stopping problem:
\begin{align}
\mathcal{U}_t &:=\esssup_{\tau^b \in \mathcal{T}_{t,T},\tau^s \in \mathcal{T}_{t,T}}~\E^{\tilde{\Q}}\big\{e^{-\int_t^{\tau^s} r_v  dv}P_{\tau^s}-e^{-\int_t^{\tau^b} \! r_v  dv}P_{\tau^b}|\G_t\big\}.\nonumber
\end{align}
Since the defaultable claim will mature at  $T$, we interpret the choice of $\tau^b=T$ or  $\tau^s=T$ as no buy/sell transaction at $T$.

In fact, we can separate $\mathcal{U}$ into two optimal (single) stopping
problems. Precisely, we have
\begin{align}
\mathcal{U}_t &=\big(\esssup_{\tau^s \in \mathcal{T}_{t,T}}~\E^{\tilde{\Q}}\big\{e^{-\int_t^{\tau^s} r_vdv}P_{\tau^s}|\G_t\big\}-P_t\big)+\big(P_t-\essinf_{\tau^b \in \mathcal{T}_{t,T}}~\E^{\tilde{\Q}}\big\{e^{-\int_t^{\tau^b} r_v dv}P_{\tau^b}|\G_t\big\}\big)\notag\\
&=L_t+L_t^{b}.
\end{align}
Hence, we have separated $\mathcal{U}$ into a sum of  the \emph{delayed
liquidation premium} and   the \emph{delayed purchase premium}. As a result,  the optimal sale time $\tau^{s*}$ does not depend on the choice of the optimal purchase time $\tau^{b*}$.

The timing decision again
depends crucially on the sub/super-martingale properties of discounted
market price under measure $\tilde{\Q}$. Under the Markovian credit risk model in Section $3$, we can apply  Theorem
\ref{thm_main} to describe the optimal purchase and sale strategies in
terms of  the drift function $G(t,{\bf x})$ in (\ref{G_general}).

\begin{proposition}\label{prop_double1}
If $G(t,{\bf x})\geq0$ $\forall (t,{\bf x})\in [0,T]\times \R^n$, then it is
optimal to immediately buy  the defaultable claim   and hold it till maturity
$T$, i.e. $\tau^{b\ast}=t$ and $\tau^{s\ast}=T$ are optimal for $\mathcal{U}_t$. If $G(t,{\bf x})\leq0$ $\forall (t,{\bf x})\in [0,T]\times \R^n$, then it is optimal to immediately short sell
the claim  and maintain the position till  $T$, i.e. $\tau^{s\ast}=t$ and $\tau^{b\ast}=T$  are optimal for $\mathcal{U}_t$.
\end{proposition}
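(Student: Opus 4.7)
The starting point is the identity $\mathcal{U}_t = L_t + L_t^{b}$ already derived in the paragraph preceding the proposition, which decouples the sale and purchase problems. Since $L_t$ is handled directly by Theorem \ref{thm_main}, the plan is to (i) quote Theorem \ref{thm_main} for the sale component and (ii) establish an analogous probabilistic representation for the delayed purchase premium $L_t^{b}$, after which the conclusions fall out by comparing signs of the drift function $G$.

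For step (i), Theorem \ref{thm_main} gives immediately that $G \geq 0$ everywhere implies $\tau^{s*} = T$ is optimal for $L_t$, while $G \leq 0$ everywhere implies $\tau^{s*} = t$ is optimal (with $L_t \equiv 0$). For step (ii), I would mirror the proof of Theorem \ref{thm_main}. Concretely, the representation \eqref{P_SDE} for the $\tilde{\mathbb{Q}}$-dynamics of $(e^{-\int_t^u r_v dv} P_u)_{t \le u \le T}$ gives, for every $\tau^b \in \mathcal{T}_{t,T}$,
\begin{align*}
\E^{\tilde{\mathbb{Q}}}\!\left\{e^{-\int_t^{\tau^b} r_v dv} P_{\tau^b} - P_t \,\big|\, \mathcal{G}_t\right\} = \E^{\tilde{\mathbb{Q}}}\!\left\{\int_t^{\tau^b} (1-N_u) e^{-\int_t^u r_v dv} G(u,{\bf X}_u)\, du \,\big|\, \mathcal{G}_t\right\},
\end{align*}
since the $d{\bf W}^{\tilde{\mathbb{Q}}}$ and $dM^{\tilde{\mathbb{Q}}}$ terms are true martingales under the integrability assumptions. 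Rewriting $L_t^b = P_t - V_t^b = \esssup_{\tau^b}\E^{\tilde{\mathbb{Q}}}\{P_t - e^{-\int_t^{\tau^b} r_v dv} P_{\tau^b}|\mathcal{G}_t\}$ and applying the same change of filtration as in the proof of Theorem \ref{thm_main} yields
\begin{align*}
L_t^{b} = \mathbf{1}_{\{t<\tau_d\}} \esssup_{\tau^b \in \mathcal{T}_{t,T}} \E^{\tilde{\mathbb{Q}}}\!\left\{ \int_t^{\tau^b} e^{-\int_t^u (r_v + \tilde{\lambda}_v)\,dv} \bigl(-G(u,{\bf X}_u)\bigr) du \,\big|\, \mathcal{F}_t \right\}.
\end{align*}

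Step (iii) is then immediate. If $G \geq 0$ everywhere, the integrand in the representation of $L_t^{b}$ is non-positive, so the supremum is attained by the smallest stopping time $\tau^b = t$ (giving $L_t^b = 0$); combined with $\tau^{s*} = T$ from Theorem \ref{thm_main}, this yields the first claim. If $G \leq 0$ everywhere, the integrand is non-negative, so the supremum is attained at the largest stopping time $\tau^b = T$; combined with $\tau^{s*} = t$, this gives the second claim.

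The main conceptual point (rather than a technical obstacle) is recognizing that taking an infimum in $L_t^{b}$ flips the role of $G$: the purchase premium is governed by $-G$, which is exactly why the optimal buy and sell times are mirror images of each other. The only mildly delicate step is justifying the probabilistic representation for $L_t^{b}$, but this is a verbatim adaptation of the argument already given for Theorem \ref{thm_main}, using that $\tau^b = T$ is admissible (so $L_t^b \ge 0$) and that the martingale terms in \eqref{P_SDE} vanish in conditional expectation.
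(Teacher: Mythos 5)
Your proof is correct and follows exactly the route the paper intends: the paper states Proposition \ref{prop_double1} without a formal proof, relying on the decomposition $\mathcal{U}_t = L_t + L_t^{b}$ and an application of Theorem \ref{thm_main} to each summand, which is precisely what you carry out. Your explicit derivation of the representation of $L_t^{b}$ with integrand $-G$ simply fills in the details the paper leaves implicit.
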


\subsection{Sequential Buying and Selling}
Prohibiting the  short sale of defaultable claims  implies the ordering: $\tau^b\leq \tau^s\leq T$.
Therefore, the investor's value function is
\begin{align}\label{U01}
U_t&:=\esssup_{\tau^b \in \mathcal{T}_{t,T},\tau^s \in
\mathcal{T}_{\tau^b,T}}\E^{\tilde{\Q}}\big\{e^{-\int_t^{\tau^s} \! r_v
dv}P_{\tau^s}-e^{-\int_t^{\tau^b} \! r_v
dv}P_{\tau^b}|\G_t\big\}.\end{align} The difference  $\mathcal{U}_t
-U_t\ge 0$ can be viewed as the cost of the short sale constraint to the investor.

As in Section \ref{sect-credit derivative}, we adopt the Markovian credit risk model, and derive from the $\tilde{\Q}$-dynamics of discounted market price in \eqref{P_SDE} to obtain \begin{align}
U_t&=\esssup_{\tau^b \in \mathcal{T}_{t,T},\tau^s \in
\mathcal{T}_{\tau^b,T}}~\E^{\tilde{\Q}}\big\{\int_{\tau^b}^{\tau^s}  (1-N_u) e^{-\int_t^u  r_v dv}G(u,{\bf X}_u) du |\G_t\big\}\nonumber\\
&=\1_{\{t<\tau_d\}}\esssup_{\tau^b \in \mathcal{T}_{t,T},\tau^s \in
\mathcal{T}_{\tau^b,T}}~\E^{\tilde{\Q}}\big\{\int_{\tau^b}^{\tau^s}  e^{-\int_t^u  (r_v+\tilde{\lambda}_v)  dv}G(u,{\bf X}_u) du |\F_t\big\}.
\end{align}
Using this probabilistic representation, we immediately deduce the optimal buy/sell strategy in the extreme cases analogues to Theorem \ref{thm_main}.
\begin{proposition}
If $G(t,{\bf x})\geq0$ $\forall (t,{\bf x})\in [0,T]\times \R^n$, then it is optimal to purchase the defaultable claim immediately and hold until maturity, i.e. $\tau^{b\ast}=t$ and $\tau^{s\ast}=T$ are optimal for $U_t$.

If $G(t,{\bf x})\leq0$ $\forall (t,{\bf x})\in [0,T]\times \R^n$, then it is optimal to never purchase the claim, i.e. $\tau^{b\ast}=\tau^{s\ast}=T$ is optimal for $U_t$.
\end{proposition}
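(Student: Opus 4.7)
The plan is to exploit the probabilistic representation displayed just above the statement, namely
\begin{align*}
U_t = \1_{\{t<\tau_d\}}\esssup_{\tau^b \in \mathcal{T}_{t,T},\tau^s \in \mathcal{T}_{\tau^b,T}}\E^{\tilde{\Q}}\big\{\int_{\tau^b}^{\tau^s} e^{-\int_t^u (r_v+\tilde{\lambda}_v) dv}G(u,{\bf X}_u)\, du \,\big|\,\F_t\big\}.
\end{align*}
Because the discount factor is strictly positive, the sign of the integrand matches that of $G(u,{\bf X}_u)$, so the whole question reduces to a pathwise comparison of an integral over the random interval $[\tau^b, \tau^s] \subseteq [t,T]$.

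For the first claim I would assume $G \geq 0$ everywhere and argue that pathwise monotonicity of integrals of non-negative functions gives, for every admissible pair,
\begin{align*}
\int_{\tau^b}^{\tau^s} e^{-\int_t^u (r_v+\tilde{\lambda}_v) dv}G(u,{\bf X}_u)\, du \;\leq\; \int_t^T e^{-\int_t^u (r_v+\tilde{\lambda}_v) dv}G(u,{\bf X}_u)\, du
\end{align*}
almost surely. Taking conditional expectation under $\tilde{\Q}$ and then the essential supremum preserves this inequality, so $U_t$ is bounded above by the quantity realized by the deterministic candidate $(\tau^b,\tau^s)=(t,T)$. Since that pair is admissible ($t, T \in \mathcal{T}_{t,T}$ and $T \in \mathcal{T}_{t,T}$) and attains the bound, it is optimal.

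For the second claim I would assume $G \leq 0$ everywhere. The integrand is then non-positive almost surely, so every admissible pair yields a non-positive conditional expectation, giving $U_t \leq 0$ on $\{t<\tau_d\}$. The choice $\tau^b = \tau^s = T$ (admissible because $T \in \mathcal{T}_{t,T}$ and $T \in \mathcal{T}_{T,T}$) produces a degenerate zero integral, hence $U_t = 0$ and the ``never trade'' strategy $\tau^{b\ast}=\tau^{s\ast}=T$ is optimal.

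Each step will be essentially a one-line consequence of the representation; the only genuine verification is admissibility of the deterministic stopping times $t$ and $T$, which is immediate. This mirrors exactly the logic of Theorem \ref{thm_main} and Proposition \ref{prop_double1}: once the value function is rewritten as an essential supremum of an integral whose integrand has constant sign, the optimization becomes pathwise and trivial, so I do not anticipate any real obstacles beyond keeping track of the ordering constraint $\tau^b \leq \tau^s$ that distinguishes $U$ from $\mathcal{U}$.
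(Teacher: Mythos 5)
Your argument is correct and is exactly the one the paper intends: the paper states this proposition as an immediate consequence of the probabilistic representation of $U_t$ as a double-stopping supremum of $\int_{\tau^b}^{\tau^s} e^{-\int_t^u (r_v+\tilde{\lambda}_v)\,dv}\,G(u,{\bf X}_u)\,du$, and your pathwise sign/monotonicity argument (take $[\tau^b,\tau^s]=[t,T]$ when $G\ge 0$, and the degenerate interval when $G\le 0$) is precisely that deduction spelled out. No gaps.
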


Define $\hat{U}(t,{\bf X}_t)$  as the pre-default value of $U_t$, satisfying $U_t:=\1_{\{t<\tau_d\}}\hat{U}(t,{\bf X}_t)$.  We may view $\hat{U}(t,{\bf X}_t)$ as a \emph{sequential} optimal stopping problem.
\begin{proposition} \label{prop-seq_stop}
The value function $U_t$ in \eqref{U01} can be expressed in terms of the delayed liquidation premium $\hat{L}$ in \eqref{L_predef}. Precisely, we have
\begin{align}\hat{U}(t,{\bf X}_t)=
\esssup_{\tau^b \in \mathcal{T}_{t,T}}~\E^{\tilde{\Q}}\big\{e^{-\int_t^{\tau^b} (r_u+\tilde{\lambda}_u) du}\hat{L}_{\tau^b}|\F_t\big\}.\label{Uhat22}
\end{align}
\end{proposition}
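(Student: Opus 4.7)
The plan is to establish \eqref{Uhat22} by iterated optimization over the two stopping times, taking advantage of the additive drift representation of $U_t$ that was already derived just above the proposition. Concretely, the starting point is
\begin{align*}
\hat{U}(t,{\bf X}_t)=\esssup_{\tau^b \in \mathcal{T}_{t,T}}\,\esssup_{\tau^s \in \mathcal{T}_{\tau^b,T}}~\E^{\tilde{\Q}}\Big\{\int_{\tau^b}^{\tau^s} e^{-\int_t^u (r_v+\tilde{\lambda}_v)\,dv}\,G(u,{\bf X}_u)\,du \,\Big|\, \F_t\Big\},
\end{align*}
and the goal is to evaluate the inner essential supremum first.

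First I would fix $\tau^b \in \mathcal{T}_{t,T}$ and factor the discount as $e^{-\int_t^u (r+\tilde\lambda)\,dv}=e^{-\int_t^{\tau^b}(r+\tilde\lambda)\,dv}\,e^{-\int_{\tau^b}^{u}(r+\tilde\lambda)\,dv}$. Conditioning on $\F_{\tau^b}$ and using the tower property together with the fact that the $\F_{\tau^b}$-measurable factor $e^{-\int_t^{\tau^b}(r+\tilde\lambda)\,dv}$ can be pulled out of the inner conditional expectation, the inner supremum becomes
\begin{align*}
\E^{\tilde{\Q}}\Big\{e^{-\int_t^{\tau^b}(r_v+\tilde\lambda_v)\,dv}\,\esssup_{\tau^s \in \mathcal{T}_{\tau^b,T}}\E^{\tilde{\Q}}\Big\{\int_{\tau^b}^{\tau^s} e^{-\int_{\tau^b}^{u}(r_v+\tilde\lambda_v)\,dv}\,G(u,{\bf X}_u)\,du \,\Big|\, \F_{\tau^b}\Big\}\,\Big|\,\F_t\Big\}.
\end{align*}
Here one has to justify interchanging the essential supremum with the outer conditional expectation; I would invoke a standard lattice property of the family of conditional rewards (indexed by $\tau^s \in \mathcal{T}_{\tau^b,T}$), namely that it is directed upward, which allows the essential supremum to be represented as a monotone limit and pulled inside via monotone convergence.

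Next I recognize the inner essential supremum as $\hat{L}(\tau^b,{\bf X}_{\tau^b})$ by definition \eqref{L_predef}: the Markov property of ${\bf X}$ under $\tilde{\Q}$ (together with the strong Markov property at the stopping time $\tau^b$) identifies the conditional expectation given $\F_{\tau^b}$ with the deterministic function $\hat{L}$ evaluated at $(\tau^b,{\bf X}_{\tau^b})$, since the integrand and the set $\mathcal{T}_{\tau^b,T}$ depend on the post-$\tau^b$ trajectory only through its starting point. Substituting this identification gives
\begin{align*}
\hat{U}(t,{\bf X}_t)=\esssup_{\tau^b \in \mathcal{T}_{t,T}}~\E^{\tilde{\Q}}\Big\{e^{-\int_t^{\tau^b}(r_u+\tilde\lambda_u)\,du}\,\hat{L}_{\tau^b}\,\Big|\,\F_t\Big\},
\end{align*}
which is exactly \eqref{Uhat22}.

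The main obstacle is the rigorous exchange of the inner essential supremum with the outer conditional expectation; this is where the upward-directed property of the family of rewards must be verified, by showing that for any two admissible sale times one can construct a third admissible sale time dominating both (the standard concatenation $\tau^s_1\1_A+\tau^s_2\1_{A^c}$ for a suitable $\F_{\tau^b}$-measurable set $A$). The remaining steps are conceptually routine: measurability of $\hat{L}(\tau^b,{\bf X}_{\tau^b})$, integrability under the $\tilde{\Q}$-integrability conditions inherited from \eqref{integrability}, and the strong Markov property of ${\bf X}$, which are all standard under the Lipschitz and growth conditions assumed on the coefficients of \eqref{SDE_X_P}.
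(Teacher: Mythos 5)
Your proof is correct, and it reaches \eqref{Uhat22} by a genuinely different mechanism than the paper. You start from the drift representation $U_t=\1_{\{t<\tau_d\}}\esssup\E^{\tilde{\Q}}\{\int_{\tau^b}^{\tau^s}e^{-\int_t^u(r_v+\tilde{\lambda}_v)dv}G(u,{\bf X}_u)du|\F_t\}$ derived just above the proposition, and then commute the inner essential supremum over $\tau^s$ past the outer conditional expectation, which forces you to verify the upward-directed (lattice) property of the family of conditional rewards and invoke monotone convergence. The paper instead works with the original price representation \eqref{U01}: by repeated conditioning it bounds the inner conditional expectation above by $V_{\tau^b}$, giving $U_t\le\esssup_{\tau^b}\E^{\tilde{\Q}}\{e^{-\int_t^{\tau^b}r_u du}(V_{\tau^b}-P_{\tau^b})|\G_t\}$, and then obtains the reverse inequality by exhibiting the explicit optimizer $\tau^{s*}=\inf\{t\ge\tau^b:V_t=P_t\}$ from \eqref{tau}, which is admissible for $U_t$; the identity $L_{\tau^b}=V_{\tau^b}-P_{\tau^b}$ and the change of filtration then yield \eqref{Uhat22}. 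The trade-off is that the paper's two-sided inequality leans on the attainment of the optimal liquidation time (standard Snell-envelope theory already cited for \eqref{tau}) and entirely avoids the esssup/expectation interchange, whereas your route avoids invoking attainment but must carry the lattice argument and the strong Markov identification of the inner value with $\hat{L}(\tau^b,{\bf X}_{\tau^b})$ explicitly. Both identifications of the inner problem's value at the stopping time $\tau^b$ are needed in either approach, so neither is strictly more economical; your version is self-contained at the level of the drift representation, the paper's is shorter because it reuses the already-established structure of $V$ and $L$.
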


\begin{proof}
We note that, after any purchase time $\tau^b$, the investor will face the  liquidation problem $V_{\tau^b}$ in \eqref{V_eq}.  Then using repeated conditioning, $U_t$ in \eqref{U01} satisfies
\begin{align}
U_t&= \esssup_{\tau^b \in \mathcal{T}_{t,T},\tau^s \in
\mathcal{T}_{\tau^b,T}}\E^{\tilde{\Q}}\big{\{}\big(e^{-\int_t^{\tau^b} r_u du}\E^{\tilde{\Q}}\big\{e^{-\int_{\tau^b}^{\tau^s} r_u du}P_{\tau^s}|\G_{\tau^b}\big\}-e^{-\int_t^{\tau^b}  r_u du}P_{\tau^b}\big)|\G_t\big{\}}\label{U22}\\
&\le \esssup_{\tau^b \in \mathcal{T}_{t,T}}~\E^{\tilde{\Q}}\big\{e^{-\int_t^{\tau^b}  r_u  du}(V_{\tau^b}-P_{\tau^b})|\G_t\big\}\label{Usec}\\
&= \esssup_{\tau^b \in \mathcal{T}_{t,T}}~\E^{\tilde{\Q}}\big\{e^{-\int_t^{\tau^b} r_u du}L_{\tau^b}|\G_t\big\}=\1_{\{t<\tau_d\}}\esssup_{\tau^b \in \mathcal{T}_{t,T}}~\E^{\tilde{\Q}}\big\{e^{-\int_t^{\tau^b} (r_u+\tilde{\lambda}_u) du}\hat{L}_{\tau^b}|\F_t\big\}.\label{seq_stop}
\end{align}
On the other hand, on the RHS of \eqref{Usec} we see that $V_{\tau^b} = \E^{\tilde{\Q}}\big\{e^{-\int_{\tau^b}^{\tau^{s*}} r_u du}P_{\tau^{s*}}|\G_{\tau^b}\big\}$, with  the optimal stopping time $\tau^{s*} := \inf\{  t \ge \tau^b : V_t  =  P_t\}$ (see \eqref{tau}). This is equivalent to  taking    the admissible stopping time  $\tau^{s*}$ for $U_t$ in \eqref{U01},  so the reverse of inequality \eqref{Usec} also  holds. Finally,  equating  \eqref{U22} and \eqref{seq_stop} and removing the default indicator, we arrive at \eqref{Uhat22}.
\end{proof}

According to Proposition \ref{prop-seq_stop}, the investor, who anticipates to liquidate the defautable claim after purchase, seeks to maximize the delayed liquidation premium when deciding to buy the derivative from the market. The practical implication of  representation \eqref{Uhat22} is that  we  first solve for the pre-default delayed liquidation premium $\hat{L}(t,{\bf x})$ by variational inequality \eqref{L_general_VI}. Then, using $\hat{L}(t,{\bf x})$ as input, we solve $\hat{U}$ by
\begin{align}
\text{min}\bigg(-\frac{\partial \hat{U}}{\partial t}(t,{\bf x})-\mathcal{L}_{\tilde{b},\tilde{\lambda}} \hat{U}(t,{\bf x}), \ \hat{U}(t,{\bf x})-\hat{L}(t,{\bf x})\bigg)=0, \quad (t,{\bf x}) \in [0,T) \times \mathbb{R}^n, \label{U_general_VI}
\end{align}
where $\mathcal{L}_{\tilde{b},\tilde{\lambda}}$ is defined in \eqref{oper}, and the  terminal condition is $\hat{U}(T,{\bf x})=0$, for $ {\bf x} \in \mathbb{R}^n$. In other words, the solution for  $\hat{L}(t,{\bf x})$ provides the investor's optimal liquidation boundary  after the purchase, and  the  variational inequality for $U(t,{\bf x})$ in \eqref{U_general_VI} gives the investor's optimal purchase boundary.

In Figure \ref{fig10}, we show a numerical example for a defaultable zero-coupon zero-recovery bond where interest rate is constant and $\lambda$ follows the CIR dynamics. The investor agrees with the market on all parameters except the speed of mean reversion for default intensity.  When $\kappa_\lambda<\tilde{\kappa}_\lambda$, the  optimal strategy is to buy as soon as the price enters the purchase region   and subsequently sell  at the (higher) optimal liquidation boundary. When $\kappa_\lambda>\tilde{\kappa}_\lambda$, the optimal liquidation boundary is below the purchase boundary. However,  it is possible that the investor buys at a lower price and subsequently sells at a higher price since both boundaries are increasing. It is also possible to buy-high-sell-low, realizing a  loss on these sample paths. On average, the optimal sequential buying and selling strategy enables the investor to profit from the price discrepancy. Finally,  when short sale is allowed, the investor's strategy follows the corresponding boundaries without the buy-first/sell-later constraint.

\begin{figure}[ht]
\centering
\begin{tabular}{cc}
\includegraphics[scale=0.458]{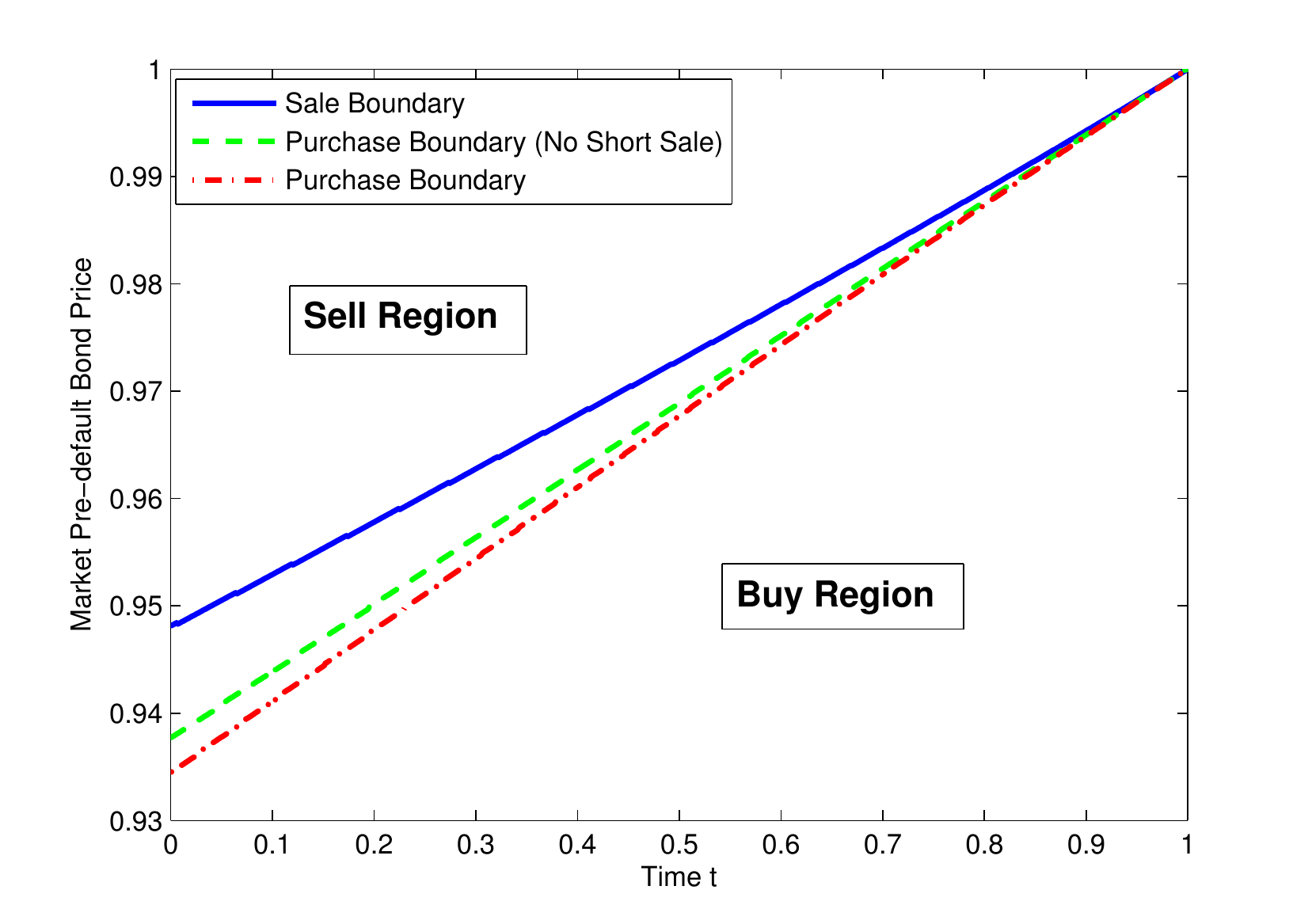} &
\includegraphics[scale=0.458]{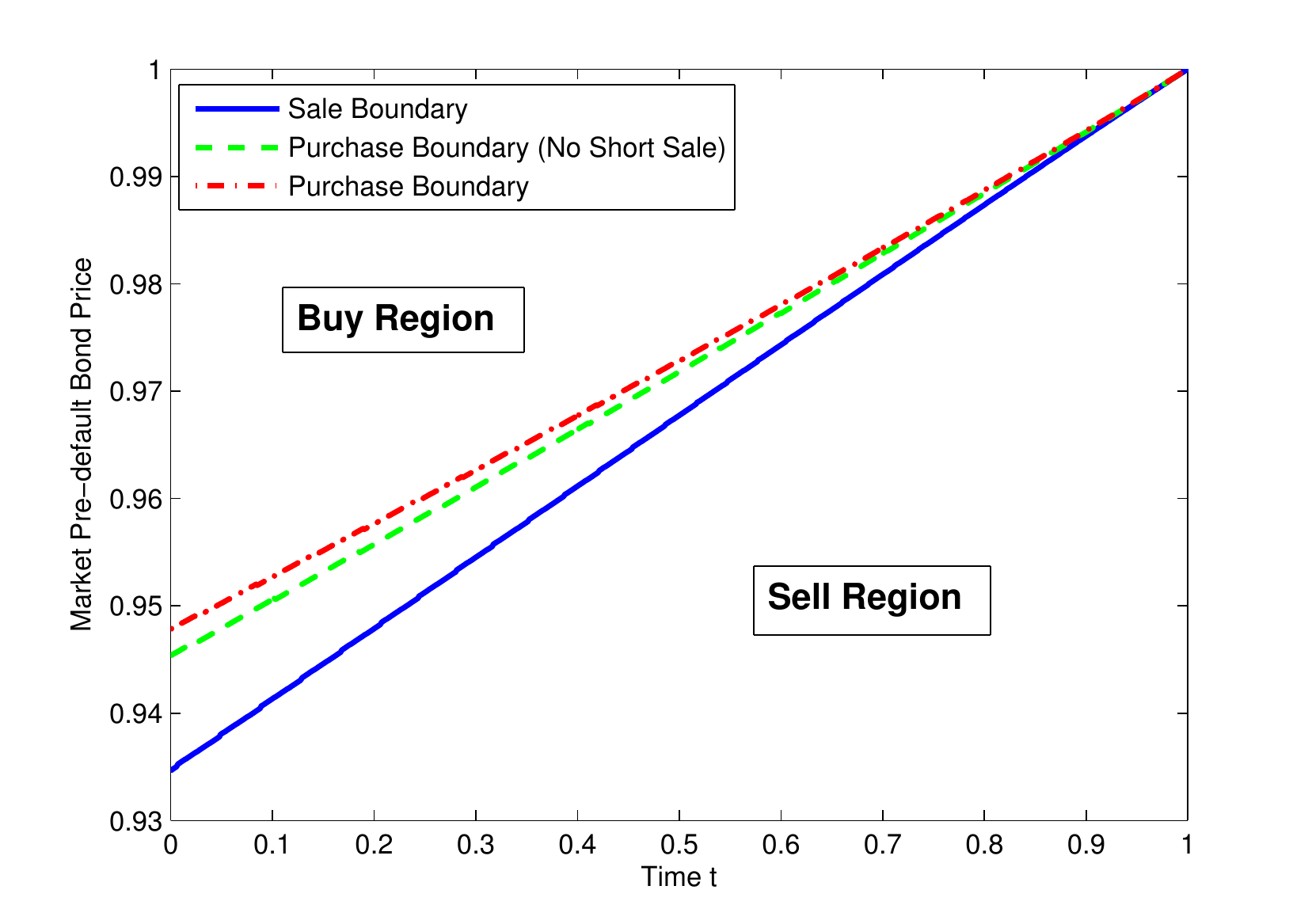} \\
\end{tabular}
\caption{\small{Optimal purchase and liquidation boundaries  in the CIR model. The common parameters are $T=1$, $r=0.03$, $\sigma=0.07$, $\mu=\tilde{\mu}=2$, and $\theta_\lambda=\tilde{\theta}_\lambda=0.015$. \emph{Left panel}: When $\kappa_\lambda=0.2<0.3=\tilde{\kappa}_\lambda$, the short sale constraint moves the  purchase boundary  higher. Both purchase boundaries, with or without short sale, are dominated by the liquidation boundary. \emph{Right panel}: When $\kappa_\lambda=0.3>0.2=\tilde{\kappa}_\lambda$, the short sale constraint moves the  purchase boundary  lower. The liquidation boundary lies below both purchase boundaries.}}
\label{fig10}
\end{figure}

\begin{remark}
In a related study, Leung and Ludkovski \cite{LeungLudkovski2011} also discuss  the  problem of sequential buying and selling  of  equity options without short sale possibility and with constant interest rate. In particular, the underlying stock admits a  \emph{local} default intensity modeled by $\hat{\lambda}(t,S_t)$, a deterministic function of time $t$ and  current stock price $S_t$.  In contrast,  our current model assumes  stochastic default intensity $\hat{\lambda}_t=\hat{\lambda}(t,{\bf X}_t)$ and interest rate $r(t,{\bf X}_t)$, driven by a stochastic factor vector ${\bf X}$. Hence, our optimal stopping value functions and buying/selling strategies depend on the stochastic factor ${\bf X}$, rather than the stock alone as in  \cite{LeungLudkovski2011}.\end{remark}

\section{Conclusions}
In summary, we have provided a flexible mathematical model for the
optimal liquidation of various credit derivatives under price discrepancy. We
have identified the situations where the optimal timing is trivial and also solved for the cases when sophisticated strategies are involved. The
optimal liquidation framework enables investors to quantify their views on
default risk, extract profit from price discrepancy, and perform more
effective risk management. Our model can also be modified and  extended
to incorporate single or multiple buying and selling decisions.

For future research, a natural direction is to consider credit derivatives trading under other default risk models. For multi-name credit derivatives, in contrast to the top-down approach taken in Section 5,  one  can consider the optimal liquidation problem under the bottom-up framework. Liquidation problems are also  important for derivatives portfolios  in general. To this end, the structure of dependency between multiple  risk factors is crucial in modeling  price dynamics.
 Moreover, it is both practically and mathematically interesting to allow for partial or sequential  liquidation (see e.g. \cite{Henderson2008, LeungYamazakiQF}). On the other hand,  market participants' pricing rules may vary due to different risk preferences. This leads to the  interesting question of how risk aversion influences their derivatives purchase/liquidation timing (see e.g. \cite{LeungLudkovski2} for the case of exponential utility).

\bibliographystyle{plain}
\singlespacing
\begin{small}
\bibliography{mybibLeungLiu2}
\end{small}

\end{document}